\documentclass[a4paper,10pt,twoside,reqno,nonamelimits,  envcountsame]{llncs}
\usepackage{makeidx}  
\usepackage{hyperref}

\usepackage{amsmath}
\usepackage{amssymb}
\usepackage{amsfonts}
\usepackage{amscd}
\usepackage{amsxtra}
\usepackage{mathabx}
\usepackage{mathrsfs}
\usepackage{nicefrac}
\usepackage{graphicx}
\usepackage{xcolor}
\usepackage{tikz}
\usepackage{wrapfig}
\usepackage{enumitem}
\setlist[enumerate]{leftmargin=2em,itemindent=0em, labelindent=0pt,labelwidth=1.5em,labelsep=.5em, align=left, noitemsep}
\newlist{txtenum}{enumerate}{1}
\setlist[txtenum]{leftmargin=0em,itemindent=1.5em, labelindent=0pt,labelwidth=1em,labelsep=.5em, align=left}



\newlength{\hspaceforlengthglumpf}






\newcommand{\One}{\mathbf{1}}

\newcommand{\lt}{\left}
\newcommand{\rt}{\right}



\newcommand{\abs}[1]{{\lt\lvert{#1}\rt\rvert}}

\newcommand{\nfrac}[2]{{\nicefrac{#1}{#2}}}

\newcommand{\ZZ}{\mathbb{Z}}


\DeclareMathOperator*{\Prb}{\mathbb{P}}
\DeclareMathOperator*{\Exp}{\mathbb{E}}


\newcommand{\eps}{\varepsilon}



\newlength{\algotabbingwidth}
\setlength{\algotabbingwidth}{1cm}


%
%







\newtheorem{fact}[theorem]{Fact}





\newcommand{\todo}[1]{}

\newcommand{\sq}{\square}
\newcommand{\I}[1]{\One_{#1}}


\newcommand{\CmnE}{E^\cap}
\newcommand{\CmnEa}{E^{\cap\ast}}          \newcommand{\Sa}{S^{\ast}}
\newcommand{\RstE}{E^r}                    \newcommand{\Rst}{R}

%
%
%
%
%
\begin{document}
\mainmatter              
\title{The Graph of the Pedigree Polytope is Asymptotically Almost Complete (Extended Abstract)}
\titlerunning{The Graph of the Pedigree Polytope}
\toctitle{The Graph of the Pedigree Polytope is Asymptotically Almost Complete}
\author{Abdullah Makkeh\inst{1} \and Mozhgan Pourmoradnasseri\inst{1} \and Dirk Oliver Theis\inst{1}\thanks{Supported by the Estonian Research Council, ETAG (\textit{Eesti Teadusagentuur}), through PUT Exploratory Grant \#620, and by the European Regional Development Fund through the Estonian Center of Excellence in Computer Science, EXCS.}}
\authorrunning{A.Makkeh, M.Pourmoradnasseri, D.O.Theis}
\tocauthor{Abdullah Makkeh, Mozhgan Pourmoradnasseri, Dirk Oliver Theis}
\institute{%
  University of Tartu\\
  Institute of Computer Science\\
  {\"U}likooli 17\\
  51014 Tartu, Estonia,\\
  \email{\{mozhgan,dotheis\}@ut.ee}\\
  WWW: \texttt{http://ac.cs.ut.ee/}%
}
\maketitle              
\pagestyle{headings}


\begin{abstract}
  Graphs (1-skeletons) of Traveling-Salesman-related polytopes have attracted a lot of attention.
  Pedigree polytopes are extensions of the classical Symmetric Traveling Salesman Problem polytopes (Arthanari 2000) whose graphs contain the TSP polytope graphs as spanning subgraphs (Arthanari 2013).
  Unlike TSP polytopes, Pedigree polytopes are not ``symmetric'', e.g., their graphs are not vertex transitive, not even regular.

  We show that in the graph of the pedigree polytope, the quotient minimum degree over number of vertices tends to~1 as the number of cities tends to infinity.

  \medskip%
  \textbf{Keywords:} Polytope, Extension, 1-Skeleton/Graph of a polytope, Traveling Salesman Problem
\end{abstract}

\section{Introduction}\label{sec:intro}
Steinitz's Theorem states that 3-connected planar graphs are precisely the graphs of 3-dimensional polytopes.

Properties of graphs of polytopes of higher dimension are of interest not only in the combinatorial study of polytopes, but also in Combinatorial Optimization, and Theoretical Computer Science.

The famous Hirsch conjecture in the combinatorial study of polytopes, settled by Santos~\cite{Santos:hirsch:2012}, concerned the diameter of graphs of polytopes.

In Combinatorial Optimization, the study of the graphs of polytopes associated with combinatorial optimization problems was initially motivated by the search for algorithms for these problems.

In Theoretical Computer Science, the theorem by Papadimitriou~\cite{Papadimitriou:adj-tsp:78} that Non-Adjacency of vertices of (Symmetric) Traveling Salesman Problem (TSP) polytopes is NP-complete, gave rise to similar results about other families of polytopes (cf.\ \cite{Aguilera:adj-setcover-polyh:2014,Maksimenko:adj-np:2014} and the references therein, for recent examples).

There have been particularly many attempts to understand the graph of TSP polytopes, and, where this turned out to be infeasible, of TSP-related polytopes (e.g., \cite{Sierksma:DAM:00}; cf.\ \cite{Groetschel-Padberg:polyh-th:1985,NaddefRinaldi93,OswaldReineltTheis07,TheisDAM10,Theis:DiscreteOpt:2014} and the references therein).  The presence of long cycles has been studied (\cite{Sierksma:skeleton:1993}, see also \cite{Naddef-Pulleyblank:JCTB:84,Naddef:JCTB:84}), as has the graph density / vertex degrees (e.g., \cite{Sarangarajan:tspskel-density-LB:1997}, see also \cite{Kaibel-Remshagen:03,Kaibel:lowdim-faces-rndptp:2004}).

The motivation for the research in this paper was a 1985 conjecture by Gr\"otschel and Padberg~\cite{Groetschel-Padberg:polyh-th:1985} --- well-known in polyhedral combinatorial optimization --- stating that the graph of TSP polytopes has diameter~2 (also, Problem \# 36 in \cite{Schrijver:Book:03}).  Already in~\cite{Groetschel-Padberg:polyh-th:1985}, Gr\"otschel and Padberg extend the question for the diameter to a family of TSP-related polytopes which seemed easier to understand at the time.

A more recent family of TSP-related polytopes are the \textit{Pedigree polytopes} of Arthanari~\cite{Arthanari:ped-one:2000}.  For this family of polytopes, adjacency of vertices can be decided in polynomial time~\cite{Arthanari:DiscreteMath:06}.  Moreover, the graphs of the TSP polytopes are spanning subgraphs of the graphs of the Pedigree polytopes~\cite{Arthanari:DiscreteOpt:13}.  This is so mainly because the Pedigree polytope for~$n$ cities is an \textit{extension,} without ``hidden'' vertices, of the TSP polytope (cf., \cite{Pashkovich:hidden:15,Fiorini-Kaibel-Pashkovich-Theis:CombLB:13}).

In this paper, we prove the following about graphs of Pedigree polytopes.  Recall that the number of vertices of the TSP polytope for~$n$ cities is the number of cycles with vertex set $[n] := \{1,\dots,n\}$, which is $(n-1)!/2$.

\begin{theorem}\label{thm:main-polytope}
  The minimum degree of a vertex on the Pedigree polytope for~$n$ cities is $(1-o(1))\cdot(n-1)!/2$ (for $n\to\infty$).
\end{theorem}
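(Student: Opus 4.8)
The plan is to fix an arbitrary tour $T$ on $[n]$ and to bound from above the number of tours $T'$ whose pedigree is \emph{not} adjacent to that of $T$ on the pedigree polytope, which I write $P_n$; the theorem follows once this number is shown to be $o((n-1)!/2)$, since the upper bound $(n-1)!/2-1$ on the minimum degree is immediate (a vertex is not adjacent to itself). Throughout I use the encoding $W(T)=(e_4,\dots,e_n)$, where $e_k$ is the edge of the tour on $[k-1]$ into which city $k$ is inserted; $T\mapsto W(T)$ is a bijection onto pedigrees, and for a uniformly random tour $T'$ the edges $e'_4,\dots,e'_n$ are independent with $e'_k$ uniform among the $k-1$ edges of the current tour $T'|_{k-1}$, while $T'|_{k-1}$ is itself a uniformly random tour on $[k-1]$ (and so shares only $O(1)$ edges with any fixed tour on $[k-1]$ in expectation).

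First I would set up the comparison with the projection $\pi\colon P_n\to P_{n-1}$ that forgets the stage-$n$ coordinates, sending the pedigree $(V,e_n)$ to $V$. Since $\pi$ is affine and maps vertices onto vertices, the basic observation is that if $\tfrac12\big(W(T)+W(T')\big)$ is written as a convex combination of pedigree vertices then every pedigree occurring must, at each stage $k$, insert city $k$ into $e_k$ or into $e'_k$; hence non-adjacency forces a short ``alternating'' combinatorial configuration, which through Arthanari's polynomial-time criterion~\cite{Arthanari:DiscreteMath:06} --- read as feasibility of a flow in the layered network over stages $3,\dots,n$ --- can be tested stage by stage. Concretely I would prove the sufficient condition: \emph{if $\{V,V'\}$ is an edge of $P_{n-1}$ and it is not the case that simultaneously $e_n\neq e'_n$, $e'_n\in E(T|_{n-1})$ and $e_n\in E(T'|_{n-1})$, then $\{W(T),W(T')\}$ is an edge of $P_n$}; this falls out of the transportation (Hall) feasibility condition for matching the stage-$n$ demand $\tfrac12\mathbf 1_{e_n}+\tfrac12\mathbf 1_{e'_n}$ against the $\pi$-fibres over the participating tours. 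Iterating down to $P_4$, whose graph is $K_3$, turns this into a recursion for $b_n :=$ the maximum over $T$ of the number of ``bad'' (non-adjacent) partners $T'$.

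The bad partners then split into two types. Type II: $\{T|_{n-1},T'|_{n-1}\}$ is an edge of $P_{n-1}$ but the exception above occurs, i.e.\ $e_n\neq e'_n$, $e'_n\in E(T|_{n-1})$, $e_n\in E(T'|_{n-1})$. This is a direct count: $T'|_{n-1}$ must contain the fixed edge $e_n$, and only $(n-3)!$ of the tours on $[n-1]$ do, with then at most $n-1$ choices of $e'_n$; so Type II contributes $O((n-1)!/n)=o((n-1)!)$. Type I: $\{T|_{n-1},T'|_{n-1}\}$ is not an edge of $P_{n-1}$. Here I would combine the inductive bound $b_{n-1}$ on the number of non-adjacent $[n-1]$-restrictions with the claim that, once the restrictions to $[n-1]$ are non-adjacent, the Hall condition above confines the last insertion edge $e'_n$ to at most a fixed fraction of the $n-1$ candidate edges, so that only that fraction of the extensions of a bad restriction stays bad. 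This gives a recursion of the form $b_n\le c\,b_{n-1}+O\big((n-1)!/n\big)$ with a genuine contraction after normalising by $(n-1)!/2$ (the factor becomes $c/(n-1)\to0$), whose solution is $b_n = o((n-1)!)$, and hence the theorem.

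The main obstacle is the contraction in Type I: proving that non-adjacency of the two $[n-1]$-restrictions \emph{plus} validity of the intermediate ``swap'' pedigrees across all of stages $5,\dots,n$ constrains the tail of $T'$ to a set of density bounded away from $1$ (ideally $O(1/n)$). Equivalently, one must certify infeasibility of Arthanari's layered flow for all but a small fraction of partners by exhibiting a cut; the subtle point is that once an intermediate pedigree diverges from $T$ at some stage it carries a bounded-size ``scar'' through all later stages, and while an adversarial $T$ can keep such a scar alive past one prescribed stage, it provably cannot do so past many stages simultaneously. Handling this interaction across the $\Theta(n)$ stages at once --- rather than one stage at a time, where a union bound only yields the vacuous $O(\log n)$ --- is where the real work lies; in particular the degenerate sub-case in which every participating tour already contains the fixed edge $e_n$ (so that the Hall condition imposes little on $e'_n$) has to be isolated and shown to concern only $o((n-1)!/2)$ partners $T'$.
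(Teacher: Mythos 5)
Your reduction to counting bad partners, your stage-by-stage encoding, and your single-stage sufficient condition are all fine: in the language of Arthanari's criterion (Theorem~\ref{thm:arthanari}), if $G^{AB}_{n-1}$ is connected, $e_n\neq e'_n$, and, say, $e_n\notin E(T'|_{n-1})$, then the new vertex $n$ necessarily receives a type-1 or type-2 edge, so $G^{AB}_n$ stays connected; and your Type II count $(n-1)(n-3)!=O((n-1)!/n)$ is correct. The proposal fails at exactly the step you yourself flag as ``where the real work lies'', and the claim you propose to prove there is false in the direction you need. If the restrictions are non-adjacent, i.e.\ the pedigree graph on $[n-1]$ is disconnected, then for $T'$ to \emph{stop} being bad the new vertex $n$ must be created \emph{and} be joined to at least two different components simultaneously; since each vertex of the pedigree graph has only a bounded number of edges to older vertices, only $O(1)$ of the $n-1$ candidate insertion edges $e'_n$ can achieve this, so all but $O(1)$ extensions of a bad restriction remain bad. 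Hence the insertion edge is \emph{not} ``confined to a fixed fraction of the $n-1$ candidates'', the fraction of surviving bad extensions tends to $1$, and the recursion $b_n\le c\,b_{n-1}+O((n-1)!/n)$ with constant $c$ (or even with $c=\alpha(n-1)$, $\alpha<1$) cannot hold; after normalising by $(n-1)!/2$ the one-step gain is only $\Theta(1/n)$, and summing these gains over all stages gives merely a logarithmic drift, not a contraction.

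This is precisely why the paper's argument is a multi-stage probabilistic one rather than a one-step counting recursion. There, the component count $T_n$ decreases at a given stage with probability only about $(T_n-1)/n$, and only when Alice (the adversarial tour) makes a ``d-move'' (Lemma~\ref{lem:transition-prbs}); an Azuma--Hoeffding argument shows Alice is forced to make $\Omega(n_0)$ d-moves in any window $[n_0,2n_0]$ because c-moves deplete the common-edge count $S_n$ (Lemma~\ref{lem:Alice-will-play--d--often}); this yields a constant probability of a decrease per dyadic window (Lemma~\ref{lem:T-will-decrease}), boosted over $O(\ln(1/\eps))$ windows (Lemma~\ref{lem:T-will-decrease:boost}); and the expected total number of component creations is bounded by $2$, with none occurring late with high probability (Lemma~\ref{lem:isol-vert}), which controls how many repairs are ever needed. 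Your plan contains no mechanism for accumulating these $\Theta(1/n)$ per-stage repair chances across $\Theta(n)$ stages against an adversarial $T$ (note the paper's simulations: even against a \emph{random} $T$, disconnection persists about $16\%$ of the time at $n\approx 100$, so no per-stage contraction exists to be found). As it stands the central step is both unproven and, as formulated, wrong, so the proposal does not yield the theorem.
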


In particular, the density graph of Pedigree polytopes is asymptotically equal to~1.  Note, though, that while for TSP polytopes, these two statements are equivalent, this is not the case of Pedigree polytopes.  The reason is that Pedigree extensions are not as ``symmetric'' as TSP polytopes (cf.~\cite{Kaibel-Pashkovich-Theis:SIDMA:2012}): For every two vertices $u,v$ of the TSP polytope for~$n$ cities, there is an affine automorphism of the polytope mapping $u$ to~$v$.  (Similar statements are true for monotone-TSP and graphical-TSP polytopes.)  This is not true for Pedigree polytopes: Arthanari's construction removes the symmetry to a large extent.

Numerical simulations show that, even for relatively large $n$ (say, $\approx 100$), the graph of the Pedigree polytope is not complete.  We have made no attempt, however, to find a non-trivial upper bound for the minimum degree.

\paragraph{We now give a non-technical description of the proof of Theorem~\ref{thm:main-polytope}.}
Arthanari's beautiful idea of a pedigree is that of a cycle ``evolving'' over time: Starting from the unique cycle with node set $\{1,2,3\}$ at time~$3$, at time~$n\ge 4$, the node~$n$ is added to the cycle by subdividing one of its edges.  We say that $n$ is \textit{inserted into} that edge.

Arthanari's combinatorial condition for adjancency on the Pedigree polytope can be thought of as a process, too, with a \textit{pedigree graph} $G$ evolving over time.  Suppose we have two evolving cycles.  Let us refer to $A$ as Alice's cycle, and to $B$ as Bob's cycle.  At time~$n$, Alice chooses an edge of her current cycle~$A$ (with node set $[n-1]$) and inserts her new node~$n$ into that edge to form her new cycle (with node set $[n]$).  At the same time, Bob chooses an edge of his current cycle~$B$, and inserts his new node~$n$ into that edge to form his new cycle.

The pedigree graph~$G$ may also change at time~$n$.  The new pedigree graph is either equal to the current one, or arises from the current one by adding the new vertex\footnote{We speak of \textit{vertices} of the pedigree graph and \textit{nodes} of the cycles, to limit confusion.} $n$ with incident edges.
The choices of Alice and Bob determine: whether the new vertex is added or not; the number of edges incident to the new vertex~$n$;  the end vertices of these edges.

Arthanari's combinatorial characterization of adjacency on the Pedigree polytope is now this.
\begin{theorem}[\cite{Arthanari:DiscreteMath:06}]\label{thm:arthanari}
  At all times $n\ge 4$, the two vertices of the Pedigree polytope for~$n$ cities corresponding to the (new) cycles $A$ and~$B$ with node set~$[n]$ are adjacent in the Pedigree polytope, if, and only if, the (new) graph~$G$ is connected.
\end{theorem}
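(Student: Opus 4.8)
}
The Pedigree polytope is a $0/1$-polytope: a cycle $T$ on $[n]$ is encoded by the vector $x^T$ whose coordinate blocks, indexed by the times $k=4,\dots,n$, record which edge of the cycle-so-far on $[k-1]$ the node $k$ is inserted into. Recall that two vertices $u,v$ of a polytope are adjacent if and only if $\tfrac12(u+v)$ has no convex representation by vertices other than the trivial one $\tfrac12 u+\tfrac12 v$; moreover, since the Pedigree polytope lies in the nonnegative orthant, every vertex occurring in such a representation of $\tfrac12(x^A+x^B)$ has support contained in $\supp(x^A)\cup\supp(x^B)$. The plan is to describe exactly which pedigree vertices have this support property.

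\medskip
A pedigree vertex with support inside $\supp(x^A)\cup\supp(x^B)$ must, at each time $k\notin V(G)$, agree with the common choice of $A$ and $B$, and, at each time $k\in V(G):=\{k:e^A_k\neq e^B_k\}$, choose one of the two edges $e^A_k,e^B_k$; in other words it is the vector $x^{A^S}$ of the \emph{$A/B$-hybrid} $A^S$, which follows $B$'s choices on a set $S\subseteq V(G)$ and $A$'s choices off $S$ --- \emph{provided} that the choice sequence $A^S$ is in fact a legal pedigree. Hence everything reduces to a purely combinatorial statement about evolving cycles: \textbf{(Key Lemma)} for $S\subseteq V(G)$, the hybrid $A^S$ is a legal pedigree if and only if $S$ is a union of connected components of $G$. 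Granting the Key Lemma, the theorem follows. If $G$ is connected the only unions of components are $\emptyset$ and $V(G)$, so the only legal hybrids are $A^{\emptyset}=A$ and $A^{V(G)}=B$, hence $\tfrac12(x^A+x^B)$ has only the trivial representation and $x^A\sim x^B$. If $G$ is disconnected, take a component $\mathcal C$ with $\emptyset\neq\mathcal C\subsetneq V(G)$; then $\mathcal C$ and $V(G)\setminus\mathcal C$ are both unions of components, so $A^{\mathcal C}$ and $A^{V(G)\setminus\mathcal C}$ are legal pedigrees, both distinct from $A$ and from $B$, and $\tfrac12 x^{A^{\mathcal C}}+\tfrac12 x^{A^{V(G)\setminus\mathcal C}}=\tfrac12(x^A+x^B)$, so $x^A\not\sim x^B$.

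\medskip
The Key Lemma is where the precise definition of the edges of $G$ must be used. These edges are designed so that $e^B_k$, which lies in Bob's cycle $C^B_{k-1}$ by construction, typically does not lie in Alice's cycle $C^A_{k-1}$, and so that for the hybrid cycle $C^{A^S}_{k-1}$ to contain $e^B_k$ the earlier insertions that ``witness'' this edge --- the one creating it, those that preserved it against later subdivision, and the insertions of its two endpoints --- must all have been inherited from $B$; these witnesses are precisely the $G$-neighbours of $k$ that are smaller than $k$. The ``if'' direction is then an induction on $k$: when $S$ is a union of components, $C^{A^S}_{k-1}$ agrees with $C^A_{k-1}$, respectively $C^B_{k-1}$, along the arc relevant to the time-$k$ choice, so that choice is available and the hybrid is built without obstruction. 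The ``only if'' direction is the contrapositive: if $S$ meets a component of $G$ without containing it, pick a $G$-edge joining some $k\in S$ to some $k'\notin S$ with $\max(k,k')$ as small as possible, and run the hybrid construction up to that time; one of $k,k'$ is then required to be inserted into an edge that the hybrid cycle does not contain, so $A^S$ is not a legal pedigree.

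\medskip
The main obstacle is the Key Lemma itself --- specifically the step-by-step bookkeeping that tracks how the hybrid cycle $C^{A^S}_{k-1}$ sits relative to $C^A_{k-1}$ and $C^B_{k-1}$, edge by edge, so as to identify exactly the set of earlier choices on which a legal use of $e^B_k$ (or $e^A_k$) depends, and, in the ``only if'' direction, to exhibit the first time at which a component-splitting $S$ forces an illegal insertion. This is the point at which the combinatorial construction of the pedigree graph has to be invoked in full rather than through its informal description. The polytope-theoretic wrapper above is routine by comparison; the only care needed there is to set up $A^S$ as a greedy partial construction that is declared to \emph{fail} at the first time its prescribed edge is unavailable, so that ``the first failure'' is meaningful before one knows whether $A^S$ is legal.
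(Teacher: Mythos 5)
Your reduction of adjacency to a census of ``hybrids'' is the natural first step, and the \emph{if} half of your Key Lemma is fine: a failure of the greedy construction of $A^S$ at time $n$ is caused exactly by a type-1 or type-2 edge of $G$ joining $n$ to an earlier vertex on the other side of the cut $(S,V(G)\setminus S)$, so if no $G$-edge crosses the cut the hybrid is legal, and two complementary unions of components give a nontrivial representation of the midpoint. But the \emph{only if} half of the Key Lemma is false, and with it your proof of the direction ``$G$ connected $\Rightarrow$ adjacent''. The point is that the edges of $G$ are \emph{oriented} obstructions: a type-1 or type-2 edge ``from $A$ to $B$'' between $n$ and $m<n$ only blocks hybrids with $n\notin S$, $m\in S$; the reversed assignment is not blocked by that edge. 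Concrete counterexample: let $A$ insert $4$ into $\{1,2\}$ and $5$ into $\{1,4\}$, and let $B$ insert $4$ into $\{2,3\}$ and $5$ into $\{2,4\}$. Then $4$ and $5$ are vertices of $G_5$, there is no type-1 edge in either direction, no type-2 edge from $B$ to $A$ (since $\nu_A(4)\cap\nu_B(5)=\{2\}\neq\emptyset$), but there is a type-2 edge from $A$ to $B$ between $5$ and $4$ (since $\nu_B(4)\cap\nu_A(5)=\{2,3\}\cap\{1,4\}=\emptyset$); so $G_5$ is connected. Nevertheless the hybrid $A^{\{5\}}$ --- insert $4$ \`a la $A$ into $\{1,2\}$, then $5$ \`a la $B$ into $\{2,4\}$ --- is a perfectly legal pedigree, because $\{2,4\}$ is an edge of the cycle $1\,4\,2\,3$. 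Thus a legal hybrid exists whose index set is not a union of components, and your sketched argument for the only-if step (take a crossing $G$-edge with smallest larger endpoint and exhibit a blocked insertion) breaks down: here the crossing edge is $\{4,5\}$ and neither insertion is blocked.

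Consequently ``$G$ connected $\Rightarrow$ the only vertices with support in $\supp(x^A)\cup\supp(x^B)$ are $x^A$ and $x^B$'' is simply not true, and adjacency cannot be established by enumerating legal hybrids; note that in the example $A_5$ and $B_5$ \emph{are} adjacent (any representation $\lambda_A x^A+\lambda_B x^B+\lambda_H x^H$ of the midpoint forces $\lambda_H=0$), so the theorem survives but needs an argument about convex combinations rather than about individual hybrids. The orientation is exactly what rescues it: each $G$-edge gives an implication valid for every legal hybrid (e.g.\ a type-2 edge from $A$ to $B$ between $n$ and $\ell$ says ``if the hybrid follows $B$ at $\ell$, it follows $B$ at $n$''), while in any convex representation of $\tfrac12(x^A+x^B)$ every participating vertex is a legal hybrid and each coordinate of the support union carries total weight $\tfrac12$; the implications plus the $\tfrac12$-marginals force the events ``$k$ follows $B$'' to coincide across every connected component, so for connected $G$ only $S=\emptyset$ and $S=V(G)$ can receive positive weight and the representation is trivial. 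Some such weighted/oriented argument is the missing ingredient (the paper itself only cites Arthanari for this theorem and proves nothing, so there is no in-paper proof to match); your nonadjacency direction, and the deferred bookkeeping behind the ``if'' half, are sound as sketched.
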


Theorem~\ref{thm:main-polytope} states that, if~$B$ is a cycle chosen uniformly at random from all cycles on~$[n]$, then
\begin{equation*}
  \min_{A} \Prb\bigl( \text{$\{$ the pedigree graph is connected $\}$} \bigr) = 1-o(1),
\end{equation*}
where the minimum ranges over all cycles on~$[n]$.  Lower bounding this quantity amounts to studying the following ``connectivity game'':  Alice's goal is to make the graph~$G$ disconnected; whereas Bob makes uniformly random choices all the time.  We prove that Alice loses with probability $1-o(1)$.  To analyze the game, we study a kind of a Markov Decision Process with state space $\ZZ_+\times \ZZ_+$.  The states are pairs $(s,t)$, where~$s$ is the number of common edges in Alice's and Bob's cycles, and~$t$ is the number of connected components of the current pedigree graph.

\paragraph{In the next section,}
we will give rigorous statements corresponding to the handwaving explanations above.  In Section~\ref{Sec:rnd-cyc-basics}, we prove some basic facts about Bob playing randomly, and discuss the intuition of the proof of the main theorem.  In Section~\ref{Sec:cnct-game}, we introduce the Markov-Decision-Problem-ish situation that Alice finds herself in.  The proof of the main theorem is sketched in Section~\ref{Sec:mainproof}; due to space limitations we have to refer to the upcoming journal version~\cite{Makkeh-Pourmoradnasseri-Theis:pedigree:18} of the paper for the details.  We conclude with a couple of questions for future research which we find compelling.

\section{Exact Statements of the Definitions, Facts, and Results}\label{sec:exact-statements}
\subsection{Cycles, One Node at a Time}
Our cycles are undirected (so, e.g., there is only one cycle on 3 nodes).
For ease of notation, let us say that the \textit{positive direction} on a cycle with node set~$[n]$, $n\ge 3$, is the one in which, when starting from the node~1, the node~2 comes before the node~3; the other direction the \textit{negative direction.}
When referring to the $k$th edge of a cycle, we count the edges in the positive direction; the 1st one being the one incident on node~1.  E.g., in the unique cycle with node set $\{1,2,3\}$, the 1st edge is $\{1,2\}$, the 2nd edge is $\{2,3\}$, and the 3rd edge is $\{3,1\}$.

As mentioned in the introduction, Arthanari's Pedigree is a combinatorial object representing the ``evolution'' of a cycle ``over time'', and the combinatorial definition of adjacency of pedigrees makes use of that step-by-step development.  The set of Pedigrees is in bijecton with the set of cycles.  In our context (we do not have to associate points in space with Pedigrees), defining Pedigrees and then explaining the bijection with cycles is more cumbersome than necessary.  For convenience, we use the following more convenient definitions, which mirror the definition of Pedigrees, but they use cycles only.  Let us say that an \textit{infinite cycle}\footnote{The reason why we use this notion of ``infinite cycle'' is pure convenience.  It does not add complexity, but it makes many of statements and proofs less cumbersome.  Indeed, instead of an infinite cycle, it is ok to just use a cycle whose length~$M$ is longer than all the lengths occuring in the particular argument.  So instead of ``let $A$ be an infinite cycle, and consider $A_k$, $A_\ell$, $A_n$'' you have to say ``let $M$ be a large enough integer, $A_M$ a cycle of length~$M$, and $A_k$, $A_\ell$, $A_n$ sub-cycles of $A_M$''.  All the little arguments (e.g., Fact~\ref{fact:Bn-uniform} below) have to be done in the same way.} %
is a sequence $A = c_{\sq} \in \prod_{n=3}^\infty [n]$.  An infinite cycle~$A$ gives rise to an infinite sequence $A_{\sq}$ of finite cycles (in the usual graph theory sense), defined inductively as follows:
\begin{itemize}
\item $A_3$ is the unique cycle with node set $\{1,2,3\}$;
\item for $n\ge 3$, $A_n$ is the cycle with node set $[n]$ which arises from adding the node $n$ to $A_{n-1}$ by inserting it into (i.e., subdividing) the $c_{n-1}$th edge.
\end{itemize}
We think of $A_{\sq}$ as a cycle developing over time: At time~$n$, the node~$n$ is added.

We will need to access the neighbors of node~$n$ in~$A_n$, i.e., the ends of the edge into which~$n$ is inserted (i.e., which is subdivided) when~$n$ is added to~$A_{\sq}$.
We write $\nu_A^+(n)$ for the neighbor of~$n$ in~$A_n$ following~$n$ in the positive direction, and $\nu_A^-(n)$ for the neighbor of~$n$ in~$A_n$ following~$n$ in the negative direction.
The unordered pair $\nu_A(n) = \{ \nu_A^+(n), \nu_A^-(n) \}$ is the $c_{n-1}$th edge of $A_{n-1}$, the one into which~$n$ was inserted.

These definitions are for $n\ge 4$ but extend naturally for $n=1,2,3$: for $n=3$ we let $\nu_A^+(3)=1$, and $\nu_A^-(3)=2$; for $n=2$, we let $\nu_A^+(2)=\nu_A^-(2)=1$. The equation $\nu_A(n) = \{ \nu_A^+(n), \nu_A^-(n) \}$ holds for $n\ge 2$ (so $\abs{\nu_A(2)}=1$); for $n=1$ we have $\nu_A(1) := \emptyset$.

\begin{remark}[\it Finding $\nu(k)$ for ``old'' nodes~$k$]\label{rem:past-nu}
  It is readily verfied directly from the definition, that, for $k\ge 2$, $\nu^\pm_A(k)$ can be found as follows: start from node~$k$ and walk in positive direction.  The first node smaller than~$k$ which you encounter is $\nu^+_A(k)$.  Similarly, if you walk in negative direction starting from~$k$, the first node smaller than~$k$ which you hit, is $\nu^-_A(k)$.
\end{remark}

A pair of nodes $i,j$ split each cycle $A_{n}$, $n>i,j$ into two (open) segments ($i,j$ do not belong to either segment).  We say that the \textit{segment between $i$ and~$j$} is the one which does \textsl{not} contain the node $\min(\{1,2,3\}\setminus\{i,j\})$ (i.e., 1, unless $1\in\{i,j\}$, in that case, 2, unless $\{1,2\}=\{i,j\}$, in that case~3).  Note that this does not depend on the choice of $n>i,j$, which justifies to say \textit{``the segment of $A_{\sq}$ between $i$ and~$j$''.}

\begin{remark}[\it Testing/finding~$n$ with $\nu(n)=\{i,j\}$]\label{rem:check-if-pair-is-a-nu}
  Given a pair of nodes $\{i,j\}$ and $n'>i,j$, there exists an~$n\le n'$ with $\nu_A(n)=\{i,j\}$ if, and only if, the segment between $i$ and~$j$ on $A_{n'}$ is non empty and every node in it is larger than both $i$ and~$j$.
  In that case, the smallest node, $n$, in the segment between $i$ and~$j$ on $A_{\sq}$ is the one with $\nu(n)=\{i,j\}$.
\end{remark}

\subsection{The Pedigree Graph}
Two infinite cycles $A,B$ give rise to a sequence of graphs $G^{AB}_{\sq}$ which we call the \textit{pedigree graphs.}  We omit the superscripted $A,B$ when possible.  We speak of \textit{vertices} of the pedigree graphs (rather than nodes).  We do this to avoid confusion between the nodes of the cycles $A_{\sq}$,$B_{\sq}$ and the vertices of $G^{AB}_{\sq}$, because the vertex set of $G_n$ is a subset of $\{4,\dots,n\}$, and hence of the node set of $A_n$ and $B_n$.  So a node $k\in [n]$ may or may not be a vertex of $G_n$.

The pedigree graph $G_{n-1}$ is the subgraph of $G_n$ induced by the vertices in $[n-1]$.  In other words, $G_n$ is either equal to $G_{n-1}$ (if $n$ is not a vertex), or it arises from $G_{n-1}$ by adding the vertex~$n$ together with edges between~$n$ and vertices in $[n-1]$.

\begin{example}
  $G_1,G_2,G_3$ are graphs without vertices.  $G_4$ may be a graph without vertices, or it may consist of a single isolated vertex~$4$.  $G_5$ could be a graph without vertices; a graph with a single vertex~$5$; a graph with two isolated vertices $4$, $5$, or a graph with two vertices $4$, $5$, linked by an edge. Check figure \ref{fig.1} for possible $G_4$ and $G_5$.
\end{example}

According to Arthanari~\cite{Arthanari:DiscreteMath:06,Arthanari:DiscreteOpt:13} the condition for the existence of vertices is the following:

\begin{enumerate}[label=(\arabic*)]
\item A node~$n\in[n]$ is a vertex of $G_n$, iff $\nu_A(n) \ne \nu_B(n)$.
\end{enumerate}
There are several conditions for the presence of edges between the vertex~$n$ and earlier vertices.  To make it easier to distinguish these, we speak of edge ``types'' and give the edges implicit ``directions:'' from $A$ to~$B$ or from $B$ to~$A$.  Here are the conditions for edges from~$n$ to earlier vertices.
\begin{enumerate}[resume,label=(\arabic*)]
\item There is a \textit{type-1 edge ``from $A$ to~$B$''} between $n$ and $k\in [n-1]$, if $\nu_A(n) = \nu_B(k)$.  (Note that the condition implies that~$k$ is a vertex.)
\item There is a \textit{type-1 edge ``from $B$ to~$A$''} Ditto, with $A$ and~$B$ exchanged.
\item There is a \textit{type-2 edge ``from $A$ to~$B$''} between $n$ and~$\ell := \max\nu_A(n)$, unless $\displaystyle \nu_B( \ell ) \cap \nu_A(n) \neq \emptyset$.  In other words, suppose the node~$n$ was inserted into the edge $\{k,\ell\}$ in~$A$, with $k < \ell$.  Now look up the end-nodes of the edge $\nu_B(\ell)$ into which~$\ell$ was inserted when it was added to~$B$.  Unless~$k$ coincides with one of these end nodes, there is an edge between $n$ and~$\ell$.
\item \textit{Type-2 edge ``from $B$ to~$A$''} Ditto, with $A$ and~$B$ exchanged.
\end{enumerate}

Arthanari's theorem~\cite{Arthanari:DiscreteMath:06} (Theorem~\ref{thm:arthanari}) states that, if $n\ge 4$, and $A_n$, $B_n$ are two cycles with node set $[n]$, then the two vertices of the Pedigree polytope (for~$n$ cities) corresponding to $A_n$ and $B_n$ are adjacent, if, and only if, $G^{AB}_n$ is connected.

We will always think of~$A$ as ``Alice's cycle'' and $B$ as ``Bob's cycle''.

\begin{example}
  Going through an example will help understand the definition of a pedigree graph.  Figure~\ref{fig.1} shows two cycles $A$ and~$B$ evolving over time $n=3,\dots,10$, together with the evolving pedigree graph $G^{AB}_{\sq}$.

\begin{figure}[hbp]
		\begin{center}
			\begin{tikzpicture}
					\draw (0,-7) -- (0,13);
					\draw (0,-7) -- (15,-7);
					\draw (0,13) -- (15,13);
					\draw (0,12) -- (15, 12);
					\draw (15,-7) -- (15,13);
						\draw (1,-7) -- (1,13);
						\draw (6,-7) -- (6,13);
						\draw (11,-7) -- (11,13);
						\draw (0,-3.5) -- (15,-3.5);
						\draw (0,0) -- (15,0);
						\draw (0,3) -- (15,3);
						\draw (0,5.5) -- (15,5.5);
						\draw (0,8) -- (15,8);
						\draw (0,10) -- (15,10);
						\draw (0,11) -- (15,11);
					\draw (0.5,12.5) node {$n$};
					\draw (3.5,12.5) node {$A_n$};
					\draw (8.5,12.5) node {$B_n$};
					\draw (13,12.5) node {$G_n^{AB}$};
					\draw (0.5,-5.25) node {10};
					\draw (0.5,-1.75) node {9};
					\draw (0.5,1.5) node {8};
					\draw (0.5,4.25) node {7};
					\draw (0.5,6.75) node {6};
					\draw (0.5,9) node {5};
					\draw (0.5,10.5) node {4};
					\draw (0.5,11.5) node {3};
					\draw (3.5,11.5) circle (0.4);
					\draw (3.5,10.5) circle (0.4);
					\draw (3.5,9) circle (0.9);
					\draw (3.5,6.75) circle (1.1);
					\draw (3.5,4.25) circle (1.1);
					\draw (3.5,1.5) circle (1.35);
					\draw (3.5,-1.75) circle (1.6);
					\draw (3.5,-5.25) circle (1.6);
					\draw (8.5,11.5) circle (0.4);
					\draw (8.5,10.5) circle (0.4);
					\draw (8.5,9) circle (0.9);
					\draw (8.5,6.75) circle (1.1);
					\draw (8.5,4.25) circle (1.1);
					\draw (8.5,1.5) circle (1.35);
					\draw (8.5,-1.75) circle (1.6);
					\draw (8.5,-5.25) circle (1.6);
						\shade[ball color=green] (3.2, 11.8) circle (2pt);
						\shade[ball color=green] (3.2, 11.2) circle (2pt);
						\shade[ball color=green] (3.9, 11.5) circle (2pt);
						\draw (3,11.8) node {\small 1};
						\draw (3,11.2) node {\small 2};
						\draw (4.1,11.5) node {\small 3};
						\shade[ball color=green] (3.2, 10.8) circle (2pt);
						\shade[ball color=green] (3.2, 10.2) circle (2pt);
						\shade[ball color=green] (3.75, 10.2) circle (2pt);
						\shade[ball color=green] (3.75, 10.8) circle (2pt);
						\draw (3,10.8) node {\small 1};
						\draw (3,10.2) node {\small 4};
						\draw (4,10.2) node {\small 2};
						\draw (4,10.8) node {\small 3};
						\shade[ball color=green] (2.9, 9.7) circle (2pt);
						\shade[ball color=green] (2.9, 8.3) circle (2pt);
						\shade[ball color=green] (4.1, 8.3) circle (2pt);
						\shade[ball color=green] (4.4, 9) circle (2pt);
						\shade[ball color=green] (4.1, 9.7) circle (2pt);
						\draw (2.7,9.8) node {\small 1};
						\draw (2.7,8.2) node {\small 4};
						\draw (4.3,8.2) node {\small 5};
						\draw (4.6,9) node {\small 2};
						\draw (4.3,9.8) node {\small 3};
						\shade[ball color=green] (3, 7.75) circle (2pt);
						\shade[ball color=green] (2.4, 6.75) circle (2pt);
						\shade[ball color=green] (3, 5.75) circle (2pt);
						\shade[ball color=green] (4, 5.75) circle (2pt);
						\shade[ball color=green] (4.6, 6.75) circle (2pt);
						\shade[ball color=green] (4, 7.75) circle (2pt);
						\draw (2.8,7.85) node {\small 1};
						\draw (2.2,6.75) node {\small 4};
						\draw (2.8,5.65) node {\small 5};
						\draw (4.2,5.65) node {\small 2};
						\draw (4.8,6.75) node {\small 6};
						\draw (4.2,7.85) node {\small 3};
						\shade[ball color=green] (2.95, 5.2) circle (2pt);
						\shade[ball color=green] (2.4, 4.25) circle (2pt);
						\shade[ball color=green] (2.95, 3.3) circle (2pt);
						\shade[ball color=green] (4.1, 3.3) circle (2pt);
						\shade[ball color=green] (4.57, 4.5) circle (2pt);
						\shade[ball color=green] (4.57, 4) circle (2pt);
						\shade[ball color=green] (4.1, 5.2) circle (2pt);
						\draw (2.75,5.3) node {\small 1};
						\draw (2.2,4.25) node {\small 4};
						\draw (2.75,3.2) node {\small 7};
						\draw (4.35,3.2) node {\small 5};
						\draw (4.8,4) node {\small 2};
						\draw (4.8,4.5) node {\small 6};
						\draw (4.35,5.3) node {\small 3};
	
						\shade[ball color=green] (2.95, 2.75) circle (2pt);
						\shade[ball color=green] (2.25, 2) circle (2pt);
						\shade[ball color=green] (2.25, 1) circle (2pt);
						\shade[ball color=green] (2.95, 0.25) circle (2pt);
						\shade[ball color=green] (4.05, 0.25) circle (2pt);
						\shade[ball color=green] (4.75, 2) circle (2pt);
						\shade[ball color=green] (4.75, 1) circle (2pt);
						\shade[ball color=green] (4.05, 2.75) circle (2pt);
						\draw (2.7,2.8) node {\small 1};
						\draw (1.95,2) node {\small 4};
						\draw (1.95,1) node {\small 7};
						\draw (2.7,0.15) node {\small 5};
						\draw (4.3,0.15) node {\small 2};
						\draw (5,1) node {\small 6};
						\draw (5,2) node {\small 8};
						\draw (4.3,2.8) node {\small 3};
						\shade[ball color=green] (2.8, -0.3) circle (2pt);
						\shade[ball color=green] (2, -1.2) circle (2pt);
						\shade[ball color=green] (2,-2.3) circle (2pt);
						\shade[ball color=green] (2.8, -3.2) circle (2pt);
						\shade[ball color=green] (4.2, -3.2) circle (2pt);
						\shade[ball color=green] (5, -2.3) circle (2pt);
						\shade[ball color=green] (5.1, -1.75) circle (2pt);
						\shade[ball color=green] (5, -1.2) circle (2pt);
						\shade[ball color=green] (4.2, -0.3) circle (2pt);
						\draw (2.6,-0.2) node {\small 1};
						\draw (1.8,-1.2) node {\small 4};
						\draw (1.8,-2.3) node {\small 7};
						\draw (2.6,-3.3) node {\small 5};
						\draw (4.4,-3.3) node {\small 2};
						\draw (5.2,-2.3) node {\small 6};
						\draw (5.3,-1.75) node {\small 8};
						\draw (5.2,-1.2) node {\small 3};
						\draw (4.4,-0.2) node {\small 9};
						\shade[ball color=green] (2.9, -3.75) circle (2pt);
						\shade[ball color=green] (2.2, -4.3) circle (2pt);
						\shade[ball color=green] (1.9,-5.25) circle (2pt);
						\shade[ball color=green] (2.2,-6.15) circle (2pt);
						\shade[ball color=green] (2.9, -6.75) circle (2pt);
						\shade[ball color=green] (4.8, -6.15) circle (2pt);
						\shade[ball color=green] (4.1, -6.75) circle (2pt);
						\shade[ball color=green] (5.1, -5.25) circle (2pt);
						\shade[ball color=green] (4.8, -4.3) circle (2pt);
						\shade[ball color=green] (4.1, -3.75) circle (2pt);
						\draw (2.7,-3.65) node {\small 1};
						\draw (2,-4.3) node {\small 4};
						\draw (1.7,-5.25) node {\small 7};
						\draw (2,-6.15) node {\small 5};
						\draw (2.7,-6.85) node {\small 2};
						\draw (4.3,-6.85) node {\small 6};
						\draw (5,-6.15) node {\small 8};
						\draw (5.3,-5.25) node {\small 3};
						\draw (5.1,-4.3) node {\small 10};
						\draw (4.3,-3.65) node {\small 9};
						\shade[ball color=green] (8.2, 11.8) circle (2pt);
						\shade[ball color=green] (8.2, 11.2) circle (2pt);
						\shade[ball color=green] (8.9, 11.5) circle (2pt);
						\draw (8,11.8) node {\small 1};
						\draw (8,11.2) node {\small 2};
						\draw (9.1,11.5) node {\small 3};
						\shade[ball color=green] (8.2, 10.8) circle (2pt);
						\shade[ball color=green] (8.2, 10.2) circle (2pt);
						\shade[ball color=green] (8.75, 10.2) circle (2pt);
						\shade[ball color=green] (8.75, 10.8) circle (2pt);
						\draw (8,10.8) node {\small 1};
						\draw (8,10.2) node {\small 2};
						\draw (9,10.2) node {\small 3};
						\draw (9,10.8) node {\small 4};
						\shade[ball color=green] (7.9, 9.7) circle (2pt);
						\shade[ball color=green] (7.9, 8.3) circle (2pt);
						\shade[ball color=green] (9.1, 8.3) circle (2pt);
						\shade[ball color=green] (9.4, 9) circle (2pt);
						\shade[ball color=green] (9.1, 9.7) circle (2pt);
						\draw (7.7,9.8) node {\small 1};
						\draw (7.7,8.2) node {\small 5};
						\draw (9.3,8.2) node {\small 2};
						\draw (9.6,9) node {\small 3};
						\draw (9.3,9.8) node {\small 4};
						\shade[ball color=green] (8, 7.75) circle (2pt);
						\shade[ball color=green] (7.4, 6.75) circle (2pt);
						\shade[ball color=green] (8, 5.75) circle (2pt);
						\shade[ball color=green] (9, 5.75) circle (2pt);
						\shade[ball color=green] (9.6, 6.75) circle (2pt);
						\shade[ball color=green] (9, 7.75) circle (2pt);
						\draw (7.8,7.85) node {\small 1};
						\draw (7.2,6.75) node {\small 5};
						\draw (7.8,5.65) node {\small 2};
						\draw (9.2,5.65) node {\small 6};
						\draw (9.8,6.75) node {\small 3};
						\draw (9.2,7.85) node {\small 4};
						\shade[ball color=green] (7.95, 5.2) circle (2pt);
						\shade[ball color=green] (7.4, 4.25) circle (2pt);
						\shade[ball color=green] (7.95, 3.3) circle (2pt);
						\shade[ball color=green] (9.1, 3.3) circle (2pt);
						\shade[ball color=green] (9.57, 4.5) circle (2pt);
						\shade[ball color=green] (9.57, 4) circle (2pt);
						\shade[ball color=green] (9.1, 5.2) circle (2pt);
						\draw (7.75,5.3) node {\small 1};
						\draw (7.2,4.25) node {\small 5};
						\draw (7.75,3.2) node {\small 2};
						\draw (9.35,3.2) node {\small 6};
						\draw (9.8,4) node {\small 3};
						\draw (9.8,4.5) node {\small 7};
						\draw (9.35,5.3) node {\small 4};
		
						\shade[ball color=green] (7.95, 2.75) circle (2pt);
						\shade[ball color=green] (7.25, 2) circle (2pt);
						\shade[ball color=green] (7.25, 1) circle (2pt);
						\shade[ball color=green] (7.95, 0.25) circle (2pt);
						\shade[ball color=green] (9.05, 0.25) circle (2pt);
						\shade[ball color=green] (9.75, 2) circle (2pt);
						\shade[ball color=green] (9.75, 1) circle (2pt);
						\shade[ball color=green] (9.05, 2.75) circle (2pt);
						\draw (7.7,2.8) node {\small 1};
						\draw (6.95,2) node {\small 5};
						\draw (6.95,1) node {\small 2};
						\draw (7.7,0.15) node {\small 6};
						\draw (9.3,0.15) node {\small 3};
						\draw (10,1) node {\small 7};
						\draw (10,2) node {\small 4};
						\draw (9.3,2.8) node {\small 8};
						\shade[ball color=green] (7.8, -0.3) circle (2pt);
						\shade[ball color=green] (7, -1.2) circle (2pt);
						\shade[ball color=green] (7,-2.3) circle (2pt);
						\shade[ball color=green] (7.8, -3.2) circle (2pt);
						\shade[ball color=green] (9.2, -3.2) circle (2pt);
						\shade[ball color=green] (10, -2.3) circle (2pt);
						\shade[ball color=green] (10.1, -1.75) circle (2pt);
						\shade[ball color=green] (10, -1.2) circle (2pt);
						\shade[ball color=green] (9.2, -0.3) circle (2pt);
						\draw (7.6,-0.2) node {\small 1};
						\draw (6.8,-1.2) node {\small 5};
						\draw (6.8,-2.3) node {\small 2};
						\draw (7.6,-3.3) node {\small 6};
						\draw (9.4,-3.3) node {\small 3};
						\draw (10.2,-2.3) node {\small 7};
						\draw (10.3,-1.75) node {\small 4};
						\draw (10.2,-1.2) node {\small 8};
						\draw (9.4,-0.2) node {\small 9};
						\shade[ball color=green] (7.9, -3.75) circle (2pt);
						\shade[ball color=green] (7.2, -4.3) circle (2pt);
						\shade[ball color=green] (6.9,-5.25) circle (2pt);
						\shade[ball color=green] (7.2,-6.15) circle (2pt);
						\shade[ball color=green] (7.9, -6.75) circle (2pt);
						\shade[ball color=green] (9.8, -6.15) circle (2pt);
						\shade[ball color=green] (9.1, -6.75) circle (2pt);
						\shade[ball color=green] (10.1, -5.25) circle (2pt);
						\shade[ball color=green] (9.8, -4.3) circle (2pt);
						\shade[ball color=green] (9.1, -3.75) circle (2pt);
						\draw (7.7,-3.65) node {\small 1};
						\draw (7,-4.3) node {\small 5};
						\draw (6.7,-5.25) node {\small 2};
						\draw (7,-6.15) node {\small 10};
						\draw (7.7,-6.85) node {\small 6};
						\draw (9.3,-6.85) node {\small 3};
						\draw (10,-6.15) node {\small 7};
						\draw (10.3,-5.25) node {\small 4};
						\draw (10,-4.3) node {\small 8};
						\draw (9.3,-3.65) node {\small 9};
						\draw (13, 11.5) node{$G_3^{AB}=\emptyset$};
						\shade[ball color=green] (13, 10.5) circle (2pt);
						\draw (13, 10.2) node{4};
						\draw (12.5, 9) -- node[above] {\tiny $ A\xleftarrow{1} B$} node[below] {\tiny $ A\xrightarrow{2} B$} (13.5, 9);
						\shade[ball color=green] (12.5, 9) circle (2pt);
						\shade[ball color=green] (13.5, 9) circle (2pt);
						\draw (12.3, 8.8) node{4};
						\draw (13.7, 8.8) node{5};
						\draw (12.5, 6.75) -- (13.5, 6.75);
						\shade[ball color=green] (12.5, 6.75) circle (2pt);
						\shade[ball color=green] (13.5, 6.75) circle (2pt);
						\draw (12.3,6.55) node{4};
						\draw (13.7,6.55) node{5};
						\draw (12.5, 4.75) -- (13.5, 4.75);
						\draw (12.5, 4.75) -- node[below, sloped] {\tiny $ A\xrightarrow{2} B$} (13, 4);
						\draw (13, 4) -- node[below, sloped] {\tiny $ A\xleftarrow{2} B$} (13.5,4.75); 
						\shade[ball color=green] (12.5, 4.75) circle (2pt);
						\shade[ball color=green] (13.5, 4.75) circle (2pt);
						\shade[ball color=green] (13, 4) circle (2pt);
						\draw (12.3,4.95) node{4};
						\draw (13.7,4.95) node{5};
						\draw (13,3.65) node{7};
						\draw (12.5, 1.5) -- (13.5, 1.5);
						\draw (12.5, 1.5) -- (13, 0.75);
						\draw (13, 0.75) -- (13.5,1.5); 
						\shade[ball color=green] (12.5, 1.5) circle (2pt);
						\shade[ball color=green] (13.5, 1.5) circle (2pt);
						\shade[ball color=green] (13, 0.75) circle (2pt);
						\shade[ball color=green] (13, 2.25) circle (2pt);
						\draw (12.3,1.7) node{4};
						\draw (13.7,1.7) node{5};
						\draw (13,0.45) node{7};
						\draw (13,2.45) node{8};
						\draw (12.5, -1.75) -- (13.5, -1.75);
						\draw (12.5, -1.75) -- (13, -2.5);
						\draw (13, -2.5) -- (13.5,-1.75);
						\draw (12.5, -1.75) -- node[above, sloped] {\tiny $ A\xrightarrow{1} B$} (12.5,-1); 
						\draw (12.5, -1) -- node[above, sloped] {\tiny $ A\xleftarrow{2} B$} (13.5,-1);
						\shade[ball color=green] (12.5, -1.75) circle (2pt);
						\shade[ball color=green] (13.5, -1.75) circle (2pt);
						\shade[ball color=green] (13, -2.5) circle (2pt);
						\shade[ball color=green] (12.5, -1) circle (2pt);
						\shade[ball color=green] (13.5, -1) circle (2pt);
						\draw (12.3,-2) node{4};
						\draw (13.7,-2) node{5};
						\draw (13,-2.8) node{7};
						\draw (12.3,-0.7) node{9};
						\draw (13.7,-0.7) node{8};
						\draw (12.5, -5.75) -- (13.5, -5.75);
						\draw (12.5, -5.75) -- (13, -6.5);
						\draw (13, -6.5) -- (13.5,-5.75);
						\draw (12.5, -5.75) -- (12.5,-5); 
						\draw (12.5, -5) --(13.5,-5);
						\draw (12.5, -4.25) --node[above, sloped] {\tiny $ A\xrightarrow{2} B$} (13.5,-5);
						\shade[ball color=green] (12.5, -5.75) circle (2pt);
						\shade[ball color=green] (13.5, -5.75) circle (2pt);
						\shade[ball color=green] (13, -6.5) circle (2pt);
						\shade[ball color=green] (12.5, -5) circle (2pt);
						\shade[ball color=green] (13.5, -5) circle (2pt);
						\shade[ball color=green] (12.5, -4.25) circle (2pt);
						\draw (12.3,-6) node{4};
						\draw (13.7,-6) node{5};
						\draw (13,-6.8) node{7};
						\draw (12.3,-4.75) node{9};
						\draw (13.7,-4.75) node{8};
						\draw (12.2,-4) node{10};
			\end{tikzpicture}
		\end{center}
		\caption{Cycles $A_\sq$ and $B_\sq$ and corresponding $G_\sq^{AB}$ }
		\label{fig.1}
\end{figure}

  \begin{description}
     \item[$n=3$:] As mentioned above, $G^{AB}_{3}$ is a graph without vertices.

     \item[$n=4$:] Alice inserts her new node~$4$ between into the edge $\{1,2\}$ of her cycle~$A_3$; Bob inserts his new node~$4$ into the edge $\{1,3\}$ of his cycle~$B_3$.  Hence, $\{1,2\} = \nu_A(4) \ne \nu_B(4) = \{1,3\}$, so vertex~$4$ is added to $G_3^{AB}$.

     \item[$n=5$:] Alice inserts her new node~$5$ into the edge $\{2,4\}$ of her cycle~$A_4$; Bob inserts his new node~$5$ into the edge $\{1,2\}$ of his cycle~$B_4$.  Since $\{2,4\} = \nu_A(5) \ne \nu_B(5) = \{1,2\}$, vertex~$5$ is added to $G_4^{AB}$.  Let us check the edges:
     \begin{itemize}
     \item In $B_4$, the segment between $2$ and~$4$ contains the node~3 which is smaller than~$4$.  By Remark~\ref{rem:check-if-pair-is-a-nu}, there is no~$k$ with $\nu_A(5) = \nu_B(k)$, and hence no type-1 edge from $A$ to~$B$ at this time.
     \item As $\nu_B(5) = \nu_A(4)$, there is a type-1 edge between $4$ and~$5$ from $B$ to~$A$.
     \item Since $\max \nu_A(5)=4$ and $\nu_B(4) =\{1,3\} \not\ni 2$, there is also a type-2 edge between $5$ and~$4$ from $A$ to~$B$.
     \item MOZHGAN: Type-2 edge from $B$ to~$A$.
     \end{itemize}

     \item[$n=6$:] Alices inserts her new node~$6$ into the edge $\{2,3\}$ of her cycle, Bob inserts his new node~6 into the edge $\{2,3\}$ of his cycle.
       Since $\{2,3\} = \nu_A(6) = \nu_B(6) = \{2,3\}$, we don't have a vertex $6$ in $G_6^{AB}$.

     \item[$n=7$:] Alice throws into $\{4,5\}$, Bob throws into $\{3,4\}$.  Since $\{4,5\} = \nu_A(7) \ne \nu_B(7) = \{3,4\}$, the vertex~$7$ is added to $G_6^{A,B}$.
     \begin{itemize}
     \item MOZHGAN: Type-1 edge from $A$ to~$B$.
     \item MOZHGAN: Type-1 edge from $B$ to~$A$.
     \item As $\max \nu_A(7)=5$ and $\nu_B(5)=\{1,2\} \not\ni 4$, we have a type-2 edge from $A$ to~$B$ between $7$ and~$5$.
     \item As $\max \nu_B(7)=4$ and $\nu_A(4)=\{1,2\} \not\ni 3$, there is also a type-2 edge from $B$ to~$A$ between $7$ and~$4$.
     \end{itemize}

     \item[$n=8$:] Alice plays $\{3,6\}$, Bob chooses $\{1,4\}$.  Since $\{3,6\} = \nu_A(8) \ne \nu_B(8) = \{1,4\}$, the vertex~$8$ is added to $G_7^{A,B}$.
     \begin{itemize}
     \item In~$B_7$, the segment between $3$ and~$6$ is empty (just the edge).  By Remark~\ref{rem:check-if-pair-is-a-nu}, there is no~$k$ with $\nu_A(8) = \nu_B(k)$, and hence no type-1 edge from $A$ to~$B$.
     \item For the same reason (segment between $1$ and~$4$ empty), there is no type-1 edge from $B$ to~$A$ incident to the vertex~$8$.
     \item $\max \nu_A(8)=6$ and $\nu_B(6)=\{2,3\} \ni 3$.  So there is no type-2 edge from $A$ to $B$ between $8$ and a smaller vertex.
     \item $\max \nu_B(8)=4$ and $\nu_A(4)=\{1,2\} \ni 1$.  So there is no type-2 edge from $B$ to $A$ between $8$ and a smaller vertex.
     \end{itemize}
     Hence, vertex $8$~is isolated in $G_8$.

     \item[$n=9$:] Alice chooses $\{1,3\}$, Bob chooses $\{1,8\}$.  As $\{1,3\} = \nu_A(9) \ne \nu_B(9) = \{1,8\}$, the vertex~$9$ is added to $G_8^{A,B}$.
     \begin{itemize}
     \item As $\{1,3\} = \nu_A(9) = \nu_B(4) = \{1,3\}$, there is a type-1 edge from $A$ to~$B$  between $9$ and~$4$.
     \item MOZHGAN: Type-1 edge from $B$ to~$A$.
     \item MOZHGAN: Type-2 edge from $A$ to~$B$.
     \item As $\max \nu_B(9)=8$ and $\nu_A(8) = \{3,6\} \not\ni 1$, there is a type-2 edge from $B$ to~$A$ between $9$ and~$8$.
     \end{itemize}

     \item[$n=10$:] Alice chooses $\{3,9\}$, Bob chooses $\{2,6\}$.  Since $\{3,9\} = \nu_A(10) \ne \nu_B(10) = \{2,6\}$, the vertex $10$ is added to $G_9^{A,B}$.
     \begin{itemize}
     \item MOZHGAN: Type-1 edge from $A$ to~$B$.
     \item Again, in~$B$,  the segment between 3 and~9 has a vertex (4) smaller than~9.   Remark~\ref{rem:check-if-pair-is-a-nu} gives us that there is no~$k$ $\nu_B(k) = \nu_A(10)$,  so no type-1 edge from $B$ to~$A$ is created.
     \item As $\max \nu_A(10)=9$ and $\nu_B(9) = \{1,8\}\not\ni 3$, there is a type-2 edge from $A$ to~$B$ between $10$ and~$9$.
     \item As $\max \nu_B(10)=6$ and $\nu_A(6) = \{2,3\} \ni 2$, no type-2 edge from $B$ to~$A$ is created.
     \end{itemize}
  \end{description}
\end{example}

\subsection{Rephrasing Theorem \ref{thm:main-polytope}}
We now rephrase Theorem~\ref{thm:main-polytope}, in terms of the pedigree graph.  We also unravel the little-o, and move to the ``Alice-and-Bob'' letters for the cycles.

\begin{theorem}[Theorem~\ref{thm:main-polytope}, rephrased]\label{thm:main:pedigreegraph}
  For every $\eps >0$ there is an integer~$N$ such that for all $n\ge N$ and all cycles $A_n$ with node set $[n]$, if $B_n$ is drawn uniformly at random from all cycles with node set~$[n]$, then
  \begin{equation*}
    \Prb\bigl( \text{ $G^{AB}_n$ is connected } \bigr) \ge 1-\eps.
  \end{equation*}
\end{theorem}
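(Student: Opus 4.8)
The plan is to prove the equivalent ``connectivity game'' statement sketched after Theorem~\ref{thm:arthanari}. By Theorem~\ref{thm:arthanari} it suffices to show: for every $\eps>0$ there is $N$ such that whenever Alice commits in advance to a cycle $A_n$ (equivalently, to choices $c^A_3,\dots,c^A_{n-1}$) and Bob draws each $c^B_{m-1}$ independently and uniformly at random, so that $B_n$ is uniform over all cycles on $[n]$, then $\Prb(G^{AB}_n\text{ is disconnected})\le\eps$ for all $n\ge N$ and every choice of Alice. I would run the argument on the \emph{state process} $(s_m,t_m)_{m\ge 3}$, where $s_m:=\abs{E(A_m)\cap E(B_m)}$ is the number of common edges and $t_m$ is the number of connected components of $G^{AB}_m$ (the vertexless graph having $t=0$). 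Since $G_n$ is connected exactly when $t_n\le 1$, the target becomes $\Prb(t_n\ge 2)\le\eps$; note $(s_3,t_3)=(3,0)$.

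The next step is to compute the one-step transition at time~$m$ in terms of Alice's inserted edge $e_A:=\nu_A(m)$ and Bob's inserted edge $e_B:=\nu_B(m)$. If $e_A=e_B$, which forces $e_A$ to be a common edge and hence has conditional probability at most $1/(m-1)$, then no vertex is created, $t$ is unchanged, and $s$ increases by one. Otherwise vertex~$m$ is born and $t_m-t_{m-1}=1-j_m$, where $j_m$ counts the components of $G_{m-1}$ to which $m$ gets attached by type-1 or type-2 edges. The engine of the proof is a lower bound on $j_m$ drawn from Bob's randomness; its leading case is the type-2 edge ``from $A$ to~$B$'', which joins $m$ to $\ell_A:=\max\nu_A(m)$ unless $\min\nu_A(m)\in\nu_B(\ell_A)$ (and one checks that whenever this edge is present, $\ell_A$ is a genuine vertex). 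Since $\nu_B(\ell_A)$ is, conditionally on $B_{\ell_A-1}$, a uniformly random edge of $B_{\ell_A-1}$ in which each node lies in exactly two edges, that failure has probability at most $2/(\ell_A-1)$. Symmetric estimates, now invoking Bob's move $\nu_B(m)$ for the type-2 edge ``from $B$ to~$A$'' and Remark~\ref{rem:check-if-pair-is-a-nu} for the two type-1 edges, bound the remaining three failure probabilities, with $s_m$ entering through how $\nu_B(m)$ meets Alice's cycle. Put together, this gives the promised Markov-decision description: for every~$m$ and every (constrained) action of Alice, a transition kernel on $(s_m,t_m)$ in which an increase of~$t$ is controlled by quantities like $1/\ell_A(m)$, $s_m$, and~$m$.

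The main obstacle is the analysis of this Markov decision process, since the per-step bounds are not summable. Alice can never manufacture an edge of small maximum, but by always inserting into the currently smallest-maximum edge she can hold $\ell_A(m)$ down to roughly $m/2$ for \emph{every}~$m$; then $\sum_m 2/(\ell_A(m)-1)=\Theta(\log n)$, and a bare union bound over ``vertex~$m$ is isolated'' diverges. To overcome this I would reclaim what the crude estimate wastes. First, a vertex~$m$ is isolated only if all four edge conditions fail together, and the two type-2 conditions are governed by genuinely different parts of Bob's randomness, his past through $\nu_B(\ell_A)$ and his move at time~$m$ through $\nu_B(m)$, so their probabilities multiply and the type-1 conditions shrink the bound further; this should drive the expected number of isolated vertices ever created down to $o(1)$ precisely where $\ell_A$ is forced to be large. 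Second, I would use the pair $(s_m,t_m)$ itself: once a component exists, later random moves of Bob keep producing type-1 and type-2 edges into it, so one should be able to compare $t_m$ with a supermartingale which, started at~$0$, remains $\le 1$ with probability $1-o(1)$; the finitely many early rounds contribute only $O(1)$ and are absorbed into~$N$. Combining the multiplicativity of independent failures, the $(s,t)$-bookkeeping, and this potential argument should give $\Prb(t_n\ge 2)\le\eps$, which is Theorem~\ref{thm:main:pedigreegraph}.
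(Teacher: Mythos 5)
Your framing --- pass to the Alice/Bob game via Theorem~\ref{thm:arthanari}, track the pair $(s_m,t_m)$ of common-edge count and component count, and derive one-step transition bounds --- is the paper's framing, and your birth rule and the $2/(\ell_A-1)$ estimate for the type-2 condition are fine. The gap lies in the two mechanisms you propose to close the argument. The claim that the two type-2 failure events ``multiply'' and that isolated vertices become negligible ``precisely where $\ell_A$ is forced to be large'' misidentifies the mechanism. If Alice inserts her node into an edge $\{k,\ell\}$, $k<\ell$, that is \emph{common} to both current cycles (a c-move in the paper's terminology), then both edge types from $A$ to $B$ fail deterministically: adjacency in $B_{m-1}$ with $k<\ell$ forces $k\in\nu_B(\ell)$, killing the type-2 condition, and the segment between $k$ and $\ell$ in $B$ is empty, so by Remark~\ref{rem:check-if-pair-is-a-nu} no $k'$ has $\nu_B(k')=\nu_A(m)$. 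The new vertex is then isolated exactly when Bob hits another, non-adjacent common edge --- probability about $S^{\ast}_m/m$, with no dependence on $\ell_A$ at all. Conversely, on a non-common (d-)move isolation is \emph{impossible}: if the type-2 condition fails because $\min\nu_A(m)\in\nu_B(\ell_A)$, then, since $\{k,\ell_A\}$ is no longer an edge of $B$, its segment in $B$ is nonempty with all nodes larger, and Remark~\ref{rem:check-if-pair-is-a-nu} yields a $k'$ with $\nu_B(k')=\nu_A(m)$, i.e.\ a type-1 edge. So late isolated vertices are controlled by the dynamics of $S_\sq$ (replenished only at rate $1/m$), which is what Lemma~\ref{lem:isol-vert} and the c-move/d-move tables of Lemma~\ref{lem:transition-prbs} encode; an independence-and-multiplication heuristic does not deliver it, and indeed your ``$o(1)$ isolated vertices'' target contradicts $\Exp Y=2$ in Lemma~\ref{lem:isol-vert}.

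The second mechanism is also not right as stated: a supermartingale comparison showing $t_\sq$ ``started at $0$, remains $\le 1$ with probability $1-o(1)$'' is unattainable, since isolated vertices are created with constant probability (again $\Exp Y=2$; the paper's simulations see a disconnected pedigree graph about $16\%$ of the time at $n=100$ even for a \emph{random} Alice). The early rounds therefore cannot be ``absorbed into $N$'': the components they create persist until Bob merges them, and a merge requires one insertion that attaches to two distinct components simultaneously --- Lemma~\ref{lem:connect-to-component} only guarantees Bob can grow a component, not shrink $t$. What must be proved is that $t_\sq$ \emph{returns} to $1$, and the only decreases available have probability about $(t_m-1)/m$ per step and only on d-moves, which Alice will avoid as long as she owns common edges. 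This is exactly the content of the paper's Lemmas~\ref{lem:Alice-will-play--d--often}--\ref{lem:T-will-decrease:boost}: an Azuma--Hoeffding argument (conditioned on $S_{n_0}\le\ln^2 n_0$) forcing $\Omega(n_0)$ d-moves in $[n_0,2n_0]$, a resulting constant-probability decrease of $t$ per window, boosting over $\Theta(\ln(1/\eps))$ windows, and finally the tail of Lemma~\ref{lem:isol-vert} to exclude late increases so that $t=1$ is effectively absorbing. Without a substitute for these steps, your proposal does not yield Theorem~\ref{thm:main:pedigreegraph}.
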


In symbols, and using infinite cycles, this reads:
\begin{equation*}
  \forall \eps>0 \;\; \exists N\colon \; \forall A \, \forall n\ge N\colon \Prb( \text{ $G^{AB}_n$ is connected } ) \ge 1-\eps,
\end{equation*}
where the probability is taken over all infinite cycles, see the next section.  A close look at our proof shows that we are actually proving the following stronger statement (we don't have any use for it, though):
\begin{equation*}
  \forall \eps>0 \;\; \exists N\colon \; \forall A\colon \; \Prb\bigl( \text{ $\forall n\ge N$: $G^{AB}_n$ is connected } \bigr) \ge 1-\eps.
\end{equation*}

\section{Pedigree Graphs of Random Cycles}\label{Sec:rnd-cyc-basics}
We have to reconcile uniformly random cycles with the ``evolution over time'' concept of pedigrees.  The definition of an infinite cycle makes that very convenient, just do the same as with infinite sequences of coin tosses: Take, as probability measure on the sample space $\prod_{n=1}^\infty [n]$ of all infinite cycles the product of the uniform probability measures on each of the sets $[n]$, $n\ge 3$.  We refer to the atoms in this probability space as \textit{random infinite cycles.}  The following is a basic property of product probability spaces.  We will use it mostly without mentioning it.

\begin{fact}\label{fact:Bn-uniform}
  If $B$ is a random infinite cycle, then, for each $n\ge 3$, the cycle $B_n$ is uniformly random in the set of all cycles with node set $[n]$.
\end{fact}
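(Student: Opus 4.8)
The plan is to show that, for each fixed $n\ge 3$, the map which sends an ``insertion record'' $(c_3,\dots,c_{n-1})$ to the finite cycle $B_n$ it produces is a bijection onto the set of all cycles with node set $[n]$. Once this is established the Fact is immediate: $B_n$ depends only on the first finitely many coordinates $c_3,\dots,c_{n-1}$ of the infinite cycle $B=c_\sq$; under the product measure these coordinates are jointly uniform on $\prod_{k=3}^{n-1}[k]$ (exactly as the first $m$ tosses of a fair-coin sequence are uniform on $\GFtwo^m$, with the later coordinates $c_n,c_{n+1},\dots$ playing no role since $B_n$ does not depend on them); and the push-forward of a uniform measure under a bijection is again uniform.

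Concretely, set $\phi_n\colon \prod_{k=3}^{n-1}[k]\to\{\text{cycles with node set }[n]\}$, $\phi_n(c_3,\dots,c_{n-1}):=B_n$. First I would note that $\phi_n$ is well defined with the stated codomain, which is just the inductive definition of $B_\sq$: subdividing one edge of a cycle on $[n-1]$ by the new node~$n$ yields a cycle on $[n]$. To see that $\phi_n$ is injective I would exhibit its inverse by ``peeling off'' the most recently inserted node. Given $B_n$, Remark~\ref{rem:past-nu} (or simply: $n$ is the largest node and has exactly two neighbors in $B_n$) identifies the pair $\nu_B(n)=\{\nu_B^+(n),\nu_B^-(n)\}$; deleting the node~$n$ and joining its two neighbors by a single edge returns precisely the cycle $B_{n-1}$ from which $B_n$ was built, and $c_{n-1}$ is recovered as the index, counted in the positive direction, of the edge $\nu_B(n)$ in $B_{n-1}$. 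Iterating this down to $n=3$ recovers $(c_3,\dots,c_{n-1})$ uniquely, so $\phi_n$ is injective.

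Finally, $\abs{\prod_{k=3}^{n-1}[k]}=3\cdot 4\cdots(n-1)=(n-1)!/2$, which equals the number of cycles with node set $[n]$ as recalled in Section~\ref{sec:intro}; an injection between finite sets of equal cardinality is a bijection, hence $\phi_n$ is a bijection. Therefore the image under $\phi_n$ of the uniform measure on $\prod_{k=3}^{n-1}[k]$ is the uniform measure on cycles with node set $[n]$, and since the law of $(c_3,\dots,c_{n-1})$ for a random infinite cycle is precisely that uniform measure and $B_n=\phi_n(c_3,\dots,c_{n-1})$, the cycle $B_n$ is uniformly random, as claimed.

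The only point that is not pure bookkeeping about product measures — and the one I would take care with — is the well-definedness of the peeling step: that in $B_n$ the node~$n$ has exactly two neighbors, and that contracting it (replacing its two incident edges by one) returns exactly $B_{n-1}$ rather than some other cycle on $[n-1]$. This is essentially tautological from the inductive construction, but it is where one uses Remark~\ref{rem:past-nu} to see that ``last inserted'' can be read off structurally from the cycle; everything else is routine.
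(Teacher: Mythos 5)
Your proof is correct, and it is exactly the routine argument the paper has in mind: the paper states this Fact without proof, calling it ``a basic property of product probability spaces,'' and your bijection between insertion records $(c_3,\dots,c_{n-1})\in\prod_{k=3}^{n-1}[k]$ and cycles on $[n]$ (with the peeling-off inverse and the cardinality count $3\cdot4\cdots(n-1)=(n-1)!/2$), combined with the push-forward of the uniform prefix measure, is precisely the verification being alluded to. Nothing is missing.
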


\paragraph{Creating isolated vertices.} %
The first substantial result about the connectedness of the pedigree graph, concerns the creation of isolated vertices.

As outlined in the introduction, we study the situation in which Alice chooses her edge of~$A_{n-1}$ according to a sophisticated strategy, whereas Bob always chooses a uniformly random edge of $B_{n-1}$ to insert his node~$n$ into (which amounts to his cycle~$B_n$ being uniformly random in the set of all cycles on $[n]$, by Fact~\ref{fact:Bn-uniform}).  In this section, we adopt a purely ``random graph'' perspective.
For fixed~$A$ and random~$B$, the pedigree graphs $G^{AB}_{\sq}$ are a sequence of random graphs, with some weirdo distribution: At time~$n$, whether the new vertex~$n$ is added or not, and if it is, how many incident edges it has, and what their end vertices are --- these are all random events/variables.

For deterministic~$A$ and random~$B$, let the random variable~$Y$ count the total number of times that an isolated vertex of the pedigree graph is created.\label{txt:def-of-Y}  In other words, $Y = \sum_{n=4}^\infty \I{I_n}$, where $I_n$ denotes the event that, at time~$n$, $n$ is added as an isolated vertex to $G_n^{A,B}$ (and $\I{\sq}$ is the indicator random variable of the event).

\begin{lemma}\label{lem:isol-vert}
  Whatever Alice does, $\Exp Y = 2$.

  Moreover, for every $\eps>0$, if $n_0 \ge 4/\eps +2$, then, whatever Alice does
  \begin{equation*}
    \Prb\Bigl( \bigcup_{n \ge n_0} I_n \Bigr)  \quad \le \; \eps.
  \end{equation*}
\end{lemma}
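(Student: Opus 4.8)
Fix an infinite cycle~$A$. We will in fact establish that $\Prb(I_n)=\dfrac{4}{(n-1)(n-2)}$ for every $n\ge 4$, independently of~$A$ (this is the meaning of ``whatever Alice does'': the statement is proved for each fixed~$A$), and both assertions follow from this at once. Write $E(C)$ for the edge set of a finite cycle~$C$, and let $\mathcal F_{n-1}$ be the $\sigma$-algebra generated by $B_{n-1}$.

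\textbf{Step 1: a combinatorial description of $I_n$.} The key reduction is the equivalence
\begin{equation}\label{eq:I-charact}
  I_n\text{ occurs}\iff \nu_A(n)\neq\nu_B(n)\ \text{ and }\ \nu_A(n)\in E(B_{n-1})\ \text{ and }\ \nu_B(n)\in E(A_{n-1}).
\end{equation}
Since $\nu_A(n)\in E(A_{n-1})$ and $\nu_B(n)\in E(B_{n-1})$ always hold, \eqref{eq:I-charact} says: $n$ is added as an isolated vertex exactly when Alice's subdivided edge lies in Bob's current cycle, Bob's subdivided edge lies in Alice's current cycle, and the two edges differ. The first conjunct is condition~(1). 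For the rest it suffices, by the $A\leftrightarrow B$ symmetry of the construction, to show that no type-$1$ and no type-$2$ edge ``from $A$ to~$B$'' is incident to~$n$ iff $\nu_A(n)\in E(B_{n-1})$. Write $\nu_A(n)=\{k,\ell\}$ with $k<\ell$. By Remark~\ref{rem:check-if-pair-is-a-nu}, absence of a type-$1$ edge from $A$ to~$B$ means $\{k,\ell\}$ is never subdivided in~$B$; by condition~(5) (and $\ell\notin\nu_B(\ell)$), absence of a type-$2$ edge from $A$ to~$B$ means $k\in\nu_B(\ell)$, i.e.\ $\{k,\ell\}$ is one of the two edges created in~$B$ when~$\ell$ is inserted. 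Together, ``$k\in\nu_B(\ell)$ and never subdivided afterwards'' is, by Remark~\ref{rem:past-nu}, equivalent to $k$ being a neighbour of~$\ell$ in $B_{n-1}$, i.e.\ to $\{k,\ell\}\in E(B_{n-1})$. One must treat the small-index cases $\ell\in\{2,3\}$ separately: there $\nu_A(n)\in\{\{1,2\},\{1,3\},\{2,3\}\}$, condition~(5) holds vacuously since $\nu_B(2)=\{1\}$ and $\nu_B(3)=\{1,2\}$, and as all of $\{1,2\},\{1,3\},\{2,3\}$ are edges of $B_3$, ``$\{k,\ell\}\in E(B_{n-1})$'' is again equivalent to ``$\{k,\ell\}$ not subdivided in~$B$'', so \eqref{eq:I-charact} still holds. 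This case analysis is the only genuinely delicate part.

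\textbf{Step 2: edge probabilities for a uniformly random cycle.} By Fact~\ref{fact:Bn-uniform}, $B_{n-1}$ is uniform over the $(n-2)!/2$ cycles with node set $[n-1]$, so for $n\ge 5$ and any two distinct pairs $e,e'\subseteq[n-1]$, counting Hamiltonian cycles gives
\begin{align*}
  \Prb\bigl(e\in E(B_{n-1})\bigr) &= \frac{2}{n-2},\\
  \Prb\bigl(e,e'\in E(B_{n-1})\bigr) &= \frac{2}{(n-2)(n-3)} \qquad\text{if }e\cap e'\neq\emptyset,\\
  \Prb\bigl(e,e'\in E(B_{n-1})\bigr) &= \frac{4}{(n-2)(n-3)} \qquad\text{if }e\cap e'=\emptyset.
\end{align*}
(The corresponding counts of cycles are $(n-3)!$ Hamiltonian cycles through a fixed edge; $(n-4)!$ with a fixed node placed between two prescribed neighbours; and $4\cdot(n-4)!/2$, obtained by contracting the two disjoint edges and independently orienting each.)

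\textbf{Step 3: computing $\Prb(I_n)$.} Condition on $\mathcal F_{n-1}$: then $A_{n-1}$ and $\nu_A(n)$ are determined and $\nu_B(n)$ is uniform over the $n-1$ edges of $B_{n-1}$, so by \eqref{eq:I-charact}, writing $S_{n-1}:=|E(A_{n-1})\cap E(B_{n-1})|$,
\begin{equation*}
  \Prb(I_n\mid\mathcal F_{n-1}) = \One\{\nu_A(n)\in E(B_{n-1})\}\cdot\frac{S_{n-1}-1}{n-1}
  = \frac{1}{n-1}\sum_{\substack{e\in E(A_{n-1})\\ e\neq\nu_A(n)}}\One\{\,e,\nu_A(n)\in E(B_{n-1})\,\}.
\end{equation*}
Taking expectations, $\Prb(I_n)=\frac{1}{n-1}\sum_{e\neq\nu_A(n)}\Prb\bigl(e,\nu_A(n)\in E(B_{n-1})\bigr)$, the sum over the $n-2$ edges of $A_{n-1}$ other than $\nu_A(n)$. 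In the cycle $A_{n-1}$ exactly two of those edges meet $\nu_A(n)$ in a node and the remaining $n-4$ are disjoint from it, so by Step~2,
\begin{equation*}
  \Prb(I_n) = \frac{1}{n-1}\left(2\cdot\frac{2}{(n-2)(n-3)} + (n-4)\cdot\frac{4}{(n-2)(n-3)}\right) = \frac{4}{(n-1)(n-2)}
\end{equation*}
for $n\ge 5$; and for $n=4$ this holds as well, since $I_4$ occurs iff $\nu_A(4)\neq\nu_B(4)$, which has probability $2/3=\frac{4}{3\cdot 2}$.

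\textbf{Step 4: conclusion, and the main obstacle.} Summing and telescoping,
\begin{equation*}
  \Exp Y = \sum_{n\ge 4}\Prb(I_n) = 4\sum_{n\ge 4}\left(\frac{1}{n-2}-\frac{1}{n-1}\right) = 2,
\end{equation*}
and, by the union bound, $\Prb\bigl(\bigcup_{n\ge n_0}I_n\bigr)\le\sum_{n\ge n_0}\frac{4}{(n-1)(n-2)}=\frac{4}{n_0-2}\le\eps$ whenever $n_0\ge 4/\eps+2$. The substance of the proof is Step~1: unwinding the four edge-type rules~(2)--(6) through Remarks~\ref{rem:past-nu} and~\ref{rem:check-if-pair-is-a-nu}, with due care for the boundary cases, so as to collapse ``no $A\to B$ edge at~$n$'' into the clean condition ``$\nu_A(n)\in E(B_{n-1})$''. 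Once that identity is available, Steps~2--4 are elementary Hamiltonian-cycle counting and a telescoping sum.
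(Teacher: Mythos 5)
The paper itself defers this proof to the journal version, so there is no in-text argument to compare against; judged on its own, your proof is correct, and in the right setting (the lemma, like Theorem~4, is about a fixed, non-adaptive cycle $A$). Your Step~1 characterization of $I_n$ (Alice's and Bob's subdivided edges are both common edges of the two current cycles, but distinct) is exactly what rules (1)--(5) collapse to via the two Remarks, the Hamiltonian-cycle counts are right, and the resulting exact value $\Prb(I_n)=4/\bigl((n-1)(n-2)\bigr)$ reproduces both constants of the lemma ($\Exp Y=2$ and the threshold $n_0\ge 4/\eps+2$), so this is in all likelihood the intended argument. Only a trivial slip: your citations of condition ``(5)'' and of rules ``(2)--(6)'' should refer to the paper's rule (4) and rules (2)--(5).
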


For the proof we refer to the journal version of this extended abstract~\cite{Makkeh-Pourmoradnasseri-Theis:pedigree:18}.
To understand why the lemma is important, consider a pedigree graph at time $n$, just before Alice and Bob make their choices of cycle edges into which their respective new nodes~$n$ are inserted.  If~$n$ is not a vertex of the new pedigree graph~$G_n$, the number of connected components of~$G_{\sq}$ doesn't change.  If~$n$ is a vertex, and and it does have incident edges, then the number of connected components can only decrease.  The only way that the number of connected components of~$G_n$ can increase is if~$n$ is an isolated vertex in the new pedigree graph.  Hence, Lemma~\ref{lem:isol-vert} provides an upper bound on the expected number of connected components, uniform over~$n$.

\paragraph{The Intuition.}\label{paragraph:intuition} %
From Lemma~\ref{lem:isol-vert}, it is unlikely that the pedigree graph will have many components.  Indeed, intuitively, if only~2 isolated vertices are ever created, that means that most of the time either nothing happens (no new vertex) or edges are created, ultimately reducing the number of components, so the pedigree graph is connected.

While this basic intuition is essentially correct, a closer look reveals some subtleties.  First of all, Alice has a big sway in choosing the end vertices of new edges: she can pick the end vertices of type-2 edges from $A$ to~$B$; and she can influence the end vertices of type-1 edges (both directions).  

Secondly, Bob's choices are reduced by the low degrees of the vertices.  (A stronger version of~(a) is proved in~\cite{Makkeh-Pourmoradnasseri-Theis:pedigree:18}.)

\begin{lemma}
  The maximum degree of a vertex in a pedigree graph is at most~6:
  \begin{enumerate}[label=(\alph*),nosep]
  \item up to~2 to vertices created in the past; and
  \item up to~6 to future vertices.
  \end{enumerate}
\end{lemma}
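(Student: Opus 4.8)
The plan is to fix a vertex $v$ of the pedigree graph and bound separately the number of its \emph{older} neighbours (the other endpoint $<v$, necessarily joined to $v$ at the time $v$ is added) and its \emph{younger} neighbours (endpoint $>v$), in each case running through Arthanari's four edge types one at a time. I would abbreviate $\{a_1,a_2\}:=\nu_A(v)$ and $\{b_1,b_2\}:=\nu_B(v)$ with $a_1<a_2<v$, $b_1<b_2<v$, and $\{a_1,a_2\}\ne\{b_1,b_2\}$ (the latter since $v$ is a vertex). Everything rests on two facts used repeatedly: \textbf{(i)} by Remark~\ref{rem:check-if-pair-is-a-nu}, for any pair $\{i,j\}$ and either cycle at most one node is ever inserted into the edge $\{i,j\}$; and \textbf{(ii)} by Remark~\ref{rem:past-nu}, in either cycle and at any moment every neighbour of a node $w$ that is smaller than $w$ lies in $\nu(w)$ --- equivalently, if $\{x,w\}$ with $x<w$ is ever an edge of that cycle then $x\in\nu(w)$. (Also, a node never lies in the edge it is inserted into, so $w\notin\nu(w)$.)

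Part~(a). All older neighbours of $v$ appear at time $v$, and Arthanari's rules create at most one edge of each of the four types then (the two type-$1$ edges by~(i); each type-$2$ edge goes to a single named vertex, $\max\nu_A(v)$ resp.\ $\max\nu_B(v)$). The point is that the type-$1$ and the type-$2$ edge \emph{on the same side} are mutually exclusive: a type-$1$ edge ``$A\to B$'' requires $\{a_1,a_2\}$ to be an edge of cycle $B$ at some time, hence $a_1\in\nu_B(a_2)$ by~(ii); whereas the type-$2$ edge ``$A\to B$'' is present exactly when $\nu_B(a_2)\cap\nu_A(v)=\emptyset$, i.e.\ when $a_1\notin\nu_B(a_2)$. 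The same on the $B$-side. So $v$ acquires at most one older neighbour ``from each side'', hence at most $2$.

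Part~(b) and the overall bound. Every younger vertex $m>v$ is joined to $v$ only through one of $m$'s own older-endpoint edges that points at $v$, so I would count the possible $m$ by type. A type-$1$ ``$A\to B$'' edge at $m$ hits $v$ iff $\nu_A(m)=\nu_B(v)$, giving at most one such $m$ by~(i); symmetrically at most one type-$1$ ``$B\to A$'' witness. A type-$2$ ``$A\to B$'' edge at $m$ hits $v$ iff $v=\max\nu_A(m)$, i.e.\ $\nu_A(m)=\{c,v\}$ with $c<v$ a neighbour of $v$ in cycle~$A$, so $c\in\{a_1,a_2\}$ by~(ii); by~(i) there is then at most one $m$ for $c=a_1$ and one for $c=a_2$, hence at most two such $m$; likewise at most two type-$2$ ``$B\to A$'' witnesses. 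This gives at most $1+1+2+2=6$ younger neighbours, which is part~(b). To get $\deg v\le 6$ I would combine this with part~(a): if $v$ has an older neighbour on the $A$-side, then the unique node with $\nu_B(\cdot)=\nu_A(v)$, if it exists, is below $v$ (type-$1$ case), or no such node exists at all because then $a_1\notin\nu_B(a_2)$ (type-$2$ case) --- either way no younger type-$1$ ``$B\to A$'' witness survives; symmetrically an older neighbour on the $B$-side destroys the younger type-$1$ ``$A\to B$'' witness. Since by~(a) there is at most one older neighbour per side, the count $\#\text{older}+\#\text{younger}$ is at most $0+6$, $1+5$, or $2+4$ according as $v$ has $0$, $1$, or $2$ older neighbours, so $\deg v\le 6$ in every case.

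The hard part is the mutual-exclusion bookkeeping: both the same-side type-$1$/type-$2$ incompatibility in~(a) and the older/younger incompatibilities in the overall bound hinge on rewriting Arthanari's set conditions --- clauses of the form ``unless $\nu_B(\ell)\cap\nu_A(n)\ne\emptyset$'' --- into the form ``$\min\nu_A(n)\in\nu_B(\max\nu_A(n))$'' and then invoking~(ii), and a single misread there collapses the argument. There is also a little mopping up of degenerate configurations (e.g.\ $v=4$, or some of $a_1,a_2,b_1,b_2$ landing in $\{1,2,3\}$, where a putative type-$2$ target is not even a vertex), but in each of those the relevant ``unless'' clause fires on its own, so they cost nothing.
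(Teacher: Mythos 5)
The extended abstract does not actually contain a proof of this lemma --- it is stated after Lemma~\ref{lem:isol-vert} with a pointer to the journal version~\cite{Makkeh-Pourmoradnasseri-Theis:pedigree:18} --- so there is no in-paper argument to compare yours against; I can only judge it on its own terms, and it checks out. Your two facts are exactly the right levers: (i) an edge $\{i,j\}$ of either cycle is subdivided at most once, so at most one node ever has $\nu(\cdot)=\{i,j\}$, and (ii) by Remark~\ref{rem:past-nu} any neighbour of $w$ smaller than $w$ at any time lies in $\nu(w)$, together with $w\notin\nu(w)$, which legitimizes rewriting the ``unless $\nu_B(\ell)\cap\nu_A(n)\ne\emptyset$'' clause as ``$\min\nu_A(n)\in\nu_B(\max\nu_A(n))$''. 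With that, the same-side type-1/type-2 exclusion gives (a), the $1+1+2+2$ slot count gives (b), and the observation that an older A-side (resp.\ B-side) edge forces the unique potential younger type-1 ``$B\to A$'' (resp.\ ``$A\to B$'') witness either to lie below $v$ or not to exist at all yields the combined bound of $6$; your handling of the degenerate targets in $\{1,2,3\}$ (the unless-clause fires automatically, e.g.\ $\min\nu_A(v)\in\nu_B(3)=\{1,2\}$) is also correct. The only cosmetic remark is that when the two older neighbours coincide the split is $1+4$ rather than $1+5$, which of course only strengthens the bound.
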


Hence, if a vertex~$n_0$ was created as an isolated vertex or landed in a small connected component, Bob has only 4--6 shots at connecting it to another connected component.
The good news is that Alice can never ``shut down'' a connected component completely: Bob can always extend it by one more vertex.

\begin{lemma}\label{lem:connect-to-component}
  Let $C$ be a connected component of the pedigree graph $G^{AB}_{n-1}$.  There exists a $k\in C$ such that, no matter what Alice's move is at time~$n$, Bob has a move which creates the vertex~$n$ and makes it adjacent to~$k$.
\end{lemma}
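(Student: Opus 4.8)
The plan is to connect the new node~$n$ to the \emph{last-created} vertex of~$C$ by forcing a type-2 edge. (A naive attempt to use a type-1 edge from~$B$ to~$A$ would require $\nu_A(k)$ to still be an edge of the current cycle~$B_{n-1}$ for some $k\in C$, and this can fail; so I aim for a type-2 edge instead.) Concretely, set $k:=\max C$. Since $k$ is a vertex, $\nu_A(k)\neq\nu_B(k)$, and (as $k\ge4$) both are $2$-element subsets of $[k-1]$; so I may fix a node $b\in\nu_B(k)\setminus\nu_A(k)$ and put $f:=\{b,k\}$. Then $b<k$, so $\max f=k$, and $f\cap\nu_A(k)=\emptyset$. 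First I would record two easy facts. (i)~$f$ is never an edge of any $A_m$: an edge of~$A$ with larger endpoint~$k$ and smaller other endpoint exists only from time~$k$ on, and only in the form $\{k,\nu_A^{-}(k)\}$ or $\{k,\nu_A^{+}(k)\}$, while $b\notin\nu_A(k)$; in particular $f\notin E(A_{n-1})$. (ii)~$f\in E(B_k)$, since $b\in\nu_B(k)$ makes $\{b,k\}$ one of the two edges incident to~$k$ right after $k$ enters~$B$.

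The heart of the argument is to upgrade~(ii) to $f\in E(B_{n-1})$. Suppose not: then $f$ is subdivided in~$B$ at some time $m$ with $k<m\le n-1$, i.e.\ $\nu_B(m)=f$. By~(i) we have $\nu_A(m)\neq f=\nu_B(m)$, so by condition~(1) node~$m$ is a vertex of the pedigree graph; moreover $\max\nu_B(m)=k$ and $\nu_A(k)\cap\nu_B(m)=\nu_A(k)\cap f=\emptyset$, so condition~(5) (the type-2 edge from~$B$ to~$A$) puts an edge between~$m$ and~$k$ into $G_m\subseteq G_{n-1}$. Hence $m$ belongs to the component of~$k$, i.e.\ $m\in C$, contradicting $m>\max C$. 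So $f\in E(B_{n-1})$, and ``insert~$n$ into the edge~$f$ of $B_{n-1}$'' is a legal move for Bob.

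It remains to check that this one move — which ignores Alice's choice entirely — always has the desired effect. It yields $\nu_B(n)=f$, and since $f\notin E(A_{n-1})$ by~(i), Alice necessarily has $\nu_A(n)\neq f=\nu_B(n)$, so $n$ is created (condition~(1)); and because $\max\nu_B(n)=k$ and $\nu_A(k)\cap\nu_B(n)=\emptyset$, condition~(5) produces a type-2 edge from~$B$ to~$A$ between~$n$ and~$k\in C$, making $n$ adjacent to~$k$. I expect the survival claim $f\in E(B_{n-1})$ to be the step needing the most care: a priori $f$ could be subdivided in~$B$ after time~$k$, and what prevents it is exactly the choice $b\in\nu_B(k)\setminus\nu_A(k)$, which forces any such subdivision to spawn a type-2 edge back to~$k$ — impossible once~$k$ is the last-born vertex of its component. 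The rest is routine unwinding of the definitions via Remark~\ref{rem:past-nu} and conditions~(1) and~(5).
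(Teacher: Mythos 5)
Your proof is correct and follows essentially the same route as the paper's: take $k=\max V(C)$, pick an edge $\{b,k\}$ of $B$ with $b\in\nu_B(k)\setminus\nu_A(k)$, argue it survives in $B_{n-1}$ because any earlier insertion into it would have produced a type-2 edge to~$k$ and hence a vertex of~$C$ larger than~$k$, and let Bob insert~$n$ there. Your write-up is just a more explicit version (e.g.\ checking via $\{b,k\}\notin E(A_\sq)$ that the inserted node is indeed a vertex), which the paper's proof leaves implicit.
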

\begin{proof}
  Take $k := \max V(C)$.
  Since $k$ is a vertex, we have $\abs{ \nu_B(k) \cap \nu_A(k) } \le 1$.  Suppose that $\nu^+_B(k) \notin \nu_A(k)$ (the other case is symmetric).  Then, the first time Bob inserts a node, say~$n'$, into the edge on the positive side of~$k$, this will create a type-2 edge ``from $B$ to~$A$'' between $n'$ and~$k$.  Since~$k$ is the newest vertex in its component, Bob has not yet inserted a node there, so he can insert~$n$ there, now.
\qed
\end{proof}

However, for Bob to make a disconnected pedigree graph connected, at some time, he will have to manage to insert his new node in such a way that it has two incident edges, linking two connected components at the same time.

There is no difficulty in realizing that Alice wouldn't stand a chance against a strategically playing Bob.   But we claim that the game between a clever Alice and a blindfolded Bob will turn in Bob's favour almost all of the time.

\smallskip\noindent%
Computer simulations give another indication that some care has to be taken implementing the basic intuition: Even for~$n$ as large as~100, even if Alice's cycle is chosen uniformly \textsl{at random} instead of adversarial, the frequency (in 100000 samples) with which we saw a connected pedigree graph was only about 84\%.  In the remaining 16\% of cases, the typical situation is that of one giant connected component containing almost every vertex, and one tiny component growing only very slowly.
%
This indicates that even a \textsl{disinterested} Alice can do some damage.

\section{The Connectivity Game}\label{Sec:cnct-game}
At each time, Alice moves first.  As already explained, she determines the cycle~$A$ by choosing, at each time~$n$, the edge of~$A_{n-1}$ into which her new node~$n$ will be inserted.  Then Bob moves.  He determines~$B$ in the same way, but (using Fact~\ref{fact:Bn-uniform}), he will draw the edge of $B_{n-1}$ into which his new node~$n$ is inserted uniformly at random from all edges of $B_{n-1}$, and his choice is independent of his earlier choices.

We say that Bob wins, if there exists an $n_0$ such that for all $n\ge n_0$, the pedigree graph $G^{AB}_n$ is connected.  We need Bob to win ``uniformly'', i.e., $n_0$ must not depend on Alice's moves.

Let the random variable $T_n$ denote the number of connected components in the pedigree graph $G^{AB}_n$.  To analyze the development of the random process $T_\sq$, it turns out to be useful to consider a second random process, $S_\sq$.  Denote by $\CmnE_n$ the set of edges that Alice's cycle and Bob's cycles have in common,
\begin{equation*}
  \CmnE_n := E(A_n) \cap E(B_n),
\end{equation*}
and let
\begin{equation*}
  S_n := \abs{\CmnE_n}
\end{equation*}
count the number of cycle edges that Alice and Bob have in common.
We will distinguish Alice's moves by whether or not she chooses a common cycle edge to place her new cycle node.  
The set $\CmnEa_n$ holds those common edges which are not incident on the edge which Alice chooses for her new node:
\begin{equation*}
  \CmnEa_n := \bigl\{ e \in \CmnE_n \mid e \cap \nu_A(n+1) = \emptyset \bigr\};
\end{equation*}
we let $\Sa$ count the edges in $\CmnEa$:
\begin{equation*}
  \Sa_n := \abs{\CmnEa_n}.
\end{equation*}
Finally, denote by $\RstE_n$ the set of edges in Bob's cycle which are neither common nor incident on Alice's chosen edge:
\begin{align*}
  \RstE_n   &:= \bigl\{ e \in E(B_n) \setminus \CmnE_n \mid e\cap \nu_A(n+1) = \emptyset \bigr\} \text{; and}\\
  \Rst_n    &:= \abs{ \RstE_n }.
\end{align*}

We are now ready to state and prove the transition probabilities.  They depend on whether Alice chooses, for her new node, a common edge --- we refer to that as a c-move by Alice --- or an edge which is in the difference $E(A_n)\setminus \CmnE_n$ --- we call that a d-move.

\begin{lemma}\label{lem:transition-prbs}
  The conditional probabilities
    \begin{equation*}
      \Prb\Bigl(  S_{n+1}=S_n + \Delta_S \;\; \land \;\; T_{n+1}=T_n + \Delta_T \;\; \mid \; B_n \Bigr)
    \end{equation*}
    satisfy these bounds (entries not shown are ``$=0$''):
  \begin{center}
    \renewcommand{\arraystretch}{1.333333}
    \begin{tabular}{c|c|c|c|c|l c c|c|c|c|c|l}
      \multicolumn{1}{c|}{$\Delta_T$}&\multicolumn{5}{c}{}                                            &             &\multicolumn{1}{c|}{$\Delta_T$}&\multicolumn{5}{c}{} \\[.5ex]
      \cline{2-5}\cline{9-12}
      $+1$       &$=\frac{\Sa_n}{n}$&$\le\frac{2}{n}$   &                &                &          &             &$+1$       &           &                  &                           &                &                   \\[.25ex]
      \cline{2-5}\cline{9-12}
      $0$        &                  &$=\frac{\Rst_n}{n}$&$\le\frac{2}{n}$&$=\frac{1}{n}$  &          &             &$0$        &           &$=\frac{\Sa_n}{n}$&$\le\frac{\Rst_n-T_n+1}{n}$&$\le\frac{4}{n}$&                   \\[.25ex]
      \cline{2-5}\cline{9-12}
      $-1$       &                  &                   &                &                &          &             &$-1$       &\hspace*{2em}&                 &\multicolumn{2}{|c|}{ $\ge\frac{T_n-1}{n}$ }    &                   \\[.25ex]
      \cline{1-6}\cline{8-13}
      {}         &\multicolumn{1}{c}{$-2$}&\multicolumn{1}{c}{$-1$}&\multicolumn{1}{c}{$0$}&\multicolumn{1}{c}{$+1$}&$\Delta_S$&&{}&\multicolumn{1}{c}{$-2$}&\multicolumn{1}{c}{$-1$}&\multicolumn{1}{c}{$0$}&\multicolumn{1}{c}{$+1$}& $\Delta_S$        \\
      \multicolumn{6}{c}{c-move}                                                                     &\hspace{2em} &\multicolumn{6}{c}{d-move} \\[-1ex]
      \multicolumn{6}{c}{\footnotesize(Alice chooses common edge)}                                   &             &\multicolumn{6}{c}{\footnotesize(Alice chooses edge in $E(A_n)\setminus \CmnE_n$)}\\
    \end{tabular}
  \end{center}
\end{lemma}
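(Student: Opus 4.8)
The plan is to condition on $B_n$ and read the entire lemma as a counting statement. Conditioning on $B_n$ fixes both cycles up to time~$n$ (Alice moves first, so her edge $a:=\nu_A(n+1)$ is then deterministic), hence fixes $G^{AB}_n$, the numbers $S_n,T_n,\Sa_n,\Rst_n$, and the partition of the $n$ edges of $B_n$ into $\{a\}$ (present only in a c-move), the edges that share exactly one endpoint with~$a$ (two of them in a c-move, four in a d-move, once $n$ is not too small), the $\Sa_n$ edges of $\CmnEa_n$, and the $\Rst_n$ edges of $\RstE_n$; in particular $\Sa_n+\Rst_n=n-3$ in a c-move and $=n-4$ in a d-move. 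The only randomness left is Bob's uniformly random choice of which edge $b$ of $B_n$ he subdivides with node~$n+1$, so each entry of the table is a count of ``good~$b$'' over~$n$, and I would run the case analysis according to the position of $b$ relative to $a$ (equal, sharing an endpoint, or disjoint) and according to whether $b$ is a common edge.

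The first real step is to determine, in each case, whether $n+1$ becomes a vertex and which of its (at most two) edges to older vertices Arthanari's conditions~(1)--(5) create, using Remarks~\ref{rem:past-nu} and~\ref{rem:check-if-pair-is-a-nu}. Three structural facts drive everything. \textbf{(i)} If $\{p,q\}$ with $p<q$ is an edge of a cycle at time~$n$, then in that cycle $p\in\nu(q)$ and no $k$ satisfies $\nu(k)=\{p,q\}$; hence a c-move ($a\in E(B_n)$) kills all ``$A\to B$''-edges of~$n+1$, and a common Bob-edge ($b\in E(A_n)$) kills all ``$B\to A$''-edges of~$n+1$. \textbf{(ii)} In each direction the type-1 and type-2 edges of~$n+1$ are mutually exclusive, since a type-1 edge makes the relevant pair a former cycle edge and thereby triggers the exception in the type-2 rule. \textbf{(iii)} Conversely, if $a\notin E(B_n)$ then $n+1$ has \emph{exactly one} ``$A\to B$''-edge: absence of the type-2 edge forces $\min a\in\nu_B(\max a)$, i.e.\ $a$ was a $B$-edge, and being absent now it must have been subdivided, so by Remark~\ref{rem:check-if-pair-is-a-nu} the type-1 edge is present; symmetrically for $b\notin E(A_n)$. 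Combining these: in a c-move, $n+1$ is created isolated exactly when $b$ is common and otherwise gets exactly one edge; in a d-move, $n+1$ is never isolated (so $\Delta_T\le0$) and gets a second edge exactly when $b$ is non-common. The value of $\Delta_S$ is then pure accounting over the at most six affected edges: $a$ leaves $\CmnE$ iff this is a c-move, $b$ leaves $\CmnE$ iff $b$ is common, and the only edge that can enter $\CmnE$ is the common half $\{v,n+1\}$ when $a,b$ share the endpoint~$v$; this produces $\Delta_S=+1$ for $b=a$ and the full $\{-2,-1,0,+1\}$ pattern across the remaining cases. Finally $\Delta_T=+1$ iff $n+1$ is isolated, $\Delta_T=-1$ iff its two edges reach two different components of $G^{AB}_n$, and $\Delta_T=0$ otherwise; together with the edge counts this gives every entry except the bottom-right block of the d-move table.

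For that block I would use Lemma~\ref{lem:connect-to-component}. In a d-move the ``$A\to B$''-edge of~$n+1$ is a function of $B_n$ and $a$ only — it does not depend on Bob's choice $b$ — so it always places $n+1$ into one fixed component $C_0$. Lemma~\ref{lem:connect-to-component} supplies, for each of the $T_n$ components~$C$, an edge $b_C$ of $B_n$ whose subdivision attaches a ``$B\to A$''-edge of~$n+1$ to $\max V(C)$; each $b_C$ is non-common (a common $b$ gives no such edge, by fact~(i)), and the $b_C$ are pairwise distinct (by facts~(ii)--(iii) a single $b$ yields at most one ``$B\to A$''-edge, so $b_C=b_{C'}$ forces $C=C'$). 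The $T_n-1$ moves $b_C$ with $C\ne C_0$ then give $n+1$ edges into two distinct components, i.e.\ $\Delta_T=-1$; this is the bound $\ge(T_n-1)/n$. Subtracting these bridging moves from the at most $\Rst_n$ candidates with $\Delta_S=0$, and bounding separately the at most two common edges incident to~$a$ (which contribute to $\Delta_S=0,\ \Delta_T=0$), yields $(\Delta_T=0,\Delta_S=0)\le(\Rst_n-T_n+1)/n$.

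The hard part will be exactly this last step in the d-move case: verifying that the ``$A\to B$''-edge really is independent of $b$, that Bob's $T_n$ reconnection edges from Lemma~\ref{lem:connect-to-component} are genuinely distinct non-common edges of $B_n$, and that subtracting them from the $\Delta_S=0$ column is not spoiled by the few edges incident to~$a$. The remaining annoyances are the low-index conventions (when an endpoint of $a$ or $b$ lies in $\{1,2,3\}$ the definitions of $\nu^{\pm}$ differ) and small values of~$n$ (where the incidence counts ``two'' and ``four'' may be smaller); these perturb the bounds only by $O(1/n)$ and are absorbed into the inequalities as stated.
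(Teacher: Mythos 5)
The extended abstract defers the proof of this lemma to the journal version, so I can only judge your argument on its own terms. Your plan is the natural one and most of it is carried out correctly: conditioning on $B_n$ reduces everything to counting Bob's $n$ possible edges; your structural facts are right (a common Alice-edge kills all $A\to B$ edges of the new vertex, a common Bob-edge kills all $B\to A$ edges, a non-common choice yields exactly one edge in the corresponding direction, and type-1/type-2 edges in a fixed direction exclude each other); the $\Delta_S$ bookkeeping is right; and this correctly produces the whole c-move table as well as the entries $(\Delta_T,\Delta_S)=(0,-1)$, $(0,+1)$ and the $\Delta_T=-1$ row of the d-move table, the latter via Lemma~\ref{lem:connect-to-component} together with your (correct) observations that the edges $b_C$ are non-common, pairwise distinct, and that the $A\to B$ edge of $n+1$ does not depend on Bob's choice.

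The genuine gap is exactly the step you flag as hard, and what you wrote does not close it: the d-move entry $(\Delta_T,\Delta_S)=(0,0)\le(\Rst_n-T_n+1)/n$. First, the bridging edges $b_C$ of Lemma~\ref{lem:connect-to-component} are incident to $\max V(C)$ and may share an endpoint with $a=\nu_A(n+1)$; such a $b_C$ produces $\Delta_S=+1$, so it is not among the $\Rst_n$ candidates and ``subtracting these bridging moves from the at most $\Rst_n$ candidates'' is unjustified. Second, by your own case analysis the (up to two) common edges of $B_n$ incident to $a$ also land in the $(0,0)$ cell (they yield $\Delta_S=-1+1=0$ and a single pedigree edge, hence $\Delta_T=0$); ``bounding them separately'' does not remove them, so your accounting gives at best $(\Rst_n-T_n+3)/n$, and worse when some $b_C$ touch $a$. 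This is not a technicality one can wave away: take $A_5$ the cycle $1,5,2,4,3$ (insert $4$ into $\{2,3\}$, $5$ into $\{1,2\}$) and $B_5$ the cycle $1,2,5,3,4$ (insert $4$ into $\{1,3\}$, $5$ into $\{2,3\}$); then $G_5$ is the single edge $\{4,5\}$, $T_5=1$, $\CmnE_5=\{\{2,5\},\{3,4\}\}$, and if Alice inserts $6$ into $a=\{2,4\}$ (a d-move) one has $\Sa_5=0$, $\Rst_5=1$, yet Bob's three choices $\{2,5\},\{3,4\},\{3,5\}$ all give $(\Delta_T,\Delta_S)=(0,0)$, i.e.\ probability $3/5$, while $(\Rst_5-T_5+1)/5=1/5$. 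So that cell cannot be obtained by the bookkeeping you propose (indeed the target inequality, read with the definitions as printed here, is not implied by your ingredients at all); a genuinely different argument for that entry — or a corrected form of it, as presumably in the journal version — is needed, and your proposal does not supply it.
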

The proof of this lemma requires some delicated distinguishing of cases, we refer to the journal version~\cite{Makkeh-Pourmoradnasseri-Theis:pedigree:18}.

The proof of the main theorem now follows the following idea.  From the tables in Lemma~\ref{lem:transition-prbs}, you see that d-moves have chance of reducing the number of connected components --- albeit a small one.  Moreover, Alice cannot take a c-move only when $S_n > 0$, but c-moves have a strong tendency to reduce $S_\sq$.  We prove that the number of d-moves that Alice has to take are frequent enough to lead to a decrease in the number of connected components.  This suffices to prove Theorem~\ref{Sec:mainproof}, along the lines sketched on page~\pageref{paragraph:intuition}.

The next section gives more details of the proof.

\section{Proof of Theorem~\ref{thm:main:pedigreegraph}}\label{Sec:mainproof}
Using the Azuma-Hoeffding super-martingale tail bound, we can prove that, for large enough $n_0$, Alice has to take many d-moves between times $n_0$ and $2n_0$.  Due to space restrictions, we have to refer to the journal version~\cite{Makkeh-Pourmoradnasseri-Theis:pedigree:18} for all of the proofs.

\begin{lemma}\label{lem:Alice-will-play--d--often}
  For every $\eps\in\lt]0,1\rt[$, if $n_0 \ge \max(900, 8\ln(1/\eps))$, and $n_1 := 2 n_0$ then, whatever Alice does, the probability, conditioned on $B_{n_0}$ and $S_{n_0}\le \ln^2 n_0$, that among her moves at times $n = n_0+1,\dots,n_1$, there are fewer than $n_0/3$ d-moves, is at most~$\eps$.
\end{lemma}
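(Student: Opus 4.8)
The plan is to count Alice's \emph{c}-moves rather than her \emph{d}-moves. Since exactly $n_1-n_0=n_0$ moves are played in the window $n=n_0+1,\dots,n_1$, ``fewer than $n_0/3$ \emph{d}-moves'' is the same event as ``more than $2n_0/3$ \emph{c}-moves''. What I would use from Lemma~\ref{lem:transition-prbs}: a \emph{c}-move is only available when $S_n>0$, it never raises $S$ by more than one and satisfies $\Exp[\Delta_S\mid\mathcal F_n]\le-1+O(1/n)$; a \emph{d}-move satisfies $\abs{\Delta_S}\le 1$ and raises $S$ only in the (conditional probability $O(1/n)$) event that Bob's uniformly random edge touches Alice's chosen edge, so there $\Exp[\Delta_S\mid\mathcal F_n]\le O(1/n)$. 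Thus the ``common-edge budget'' $S$ caps the number of \emph{c}-moves and is drained at rate essentially one per \emph{c}-move, while $S_{n_0}\le\ln^2 n_0$ is tiny compared to $2n_0/3$.

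Concretely, introduce the potential
\begin{equation*}
  M_n := S_n + \#\{\,m : n_0<m\le n,\ \text{Alice plays a \emph{c}-move at time } m\,\},
\end{equation*}
so that $M_{n_0}=S_{n_0}\le\ln^2 n_0$ on the conditioning event and, since $S_n\ge 0$, the number of \emph{c}-moves in the window is at most $M_{n_1}$; hence it suffices to bound $\Prb(M_{n_1}>2n_0/3\mid B_{n_0},\ S_{n_0}\le\ln^2 n_0)$. By the above, $\Exp[M_{n+1}-M_n\mid\mathcal F_n]\le c_1/n$ for an absolute constant $c_1$, the increments $M_{n+1}-M_n$ lie in a bounded range, and --- the crucial quantitative point --- the increment is $0$ except with conditional probability $O((S_n+1)/n)$. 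Therefore $\widetilde M_n:=M_n-c_1\sum_{k=n_0}^{n-1}\tfrac1k$ is a supermartingale for the natural filtration $(\mathcal F_n)$ (which contains $B_{n_0}$), with bounded increments and predictable quadratic variation $\sum_k\Exp[(\widetilde M_{k+1}-\widetilde M_k)^2\mid\mathcal F_k]=O\bigl(\sum_{k=n_0}^{n_1}(S_k+1)/k\bigr)$ over the window.

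Next I would stop the process at $\tau$, the first time $S$ exceeds $\delta n_0$ for a suitable fixed constant $\delta\in(0,1)$, and split on whether $\tau<n_1$. On $\{\tau\ge n_1\}$ the quadratic variation of the stopped supermartingale is $O(\delta n_0)$, so the variance-sensitive Azuma--Hoeffding (Bernstein/Freedman) tail bound gives, with $\lambda:=2n_0/3-\ln^2 n_0-O(1)$,
\begin{equation*}
  \Prb\bigl(\widetilde M_{n_1\wedge\tau}-\widetilde M_{n_0}\ge\lambda \,\big|\, \mathcal F_{n_0}\bigr)\ \le\ \exp\!\Bigl(-\frac{\lambda^2}{2V+2c_2\lambda/3}\Bigr),\qquad V=O(\delta n_0);
\end{equation*}
since $\widetilde M_{n_0}\le\ln^2 n_0$ and $\widetilde M_{n_1}\ge M_{n_1}-O(1)$, this controls $\{M_{n_1}>2n_0/3\}\cap\{\tau\ge n_1\}$, and because $n_0\ge 900$ forces $\ln^2 n_0=o(n_0)$ (so $\lambda\ge n_0/2$) the exponent is $-\Omega(n_0)$, which is $\le\ln\eps$ once $n_0\ge 8\ln(1/\eps)$. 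On $\{\tau<n_1\}$, the number of steps of the window at which $S$ increased is at least $\delta n_0-\ln^2 n_0=\Omega(n_0)$, yet each has conditional probability $O(1/n)$, so a Chernoff bound for a sum of conditionally Bernoulli indicators whose conditional means sum to $O(1)$ bounds this event by $e^{-\Omega(n_0)}$ too. Adding the two estimates finishes the proof.

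The main obstacle is the \emph{strength} of the concentration, not its shape. A plain Azuma bound using only the worst-case increment of $\widetilde M$ would need $n_0$ to exceed $\ln(1/\eps)$ by a much larger factor than $8$; one really has to exploit that $M$ moves only $O((S_n+1)/n)$ of the time per step, which in turn forces the truncation at $\tau$, and then the constant $\delta$ must be chosen neither too small (else the Chernoff bound on the number of $S$-increases is too weak) nor too large (else the quadratic variation $O(\delta n_0)$ swamps $\lambda$); threading this is exactly what pins down the numbers $900$ and $8$. Finally, every $O(1/n)$ rate above --- and the assertion that \emph{c}-moves cannot raise $S$ by much, and the precise accounting when Bob's random edge coincides with or merely touches Alice's --- rests on the delicate case analysis behind Lemma~\ref{lem:transition-prbs}, which carries the genuinely combinatorial part of the work.
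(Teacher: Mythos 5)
Your proposal is correct in outline and follows essentially the same route the paper indicates: the extended abstract defers the details to the journal version, saying only that the lemma is proved with an Azuma--Hoeffding supermartingale tail bound over the window $n_0,\dots,2n_0$, and your potential $M_n=S_n+\#\{\text{c-moves so far}\}$, which by Lemma~\ref{lem:transition-prbs} is a supermartingale up to an $O(1/n)$ drift and caps the number of c-moves by the common-edge budget, is exactly that kind of argument. Your Freedman/stopping-time refinement (exploiting that the increment is nonzero only with conditional probability $O((S_n+1)/n)$, plus a separate Chernoff bound on the rare $S$-increases) is a sharpening within the same method, and a rough check confirms it can be threaded to meet the stated constants $900$ and $8\ln(1/\eps)$, which, as you note, is the only delicate remaining bookkeeping.
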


From this, we deduce that must $T_\sq$ decrease, but some sophistication is needed, because of the slow divergence of $\sum 1/n$: Indeed, between $n_0$ and $2n_0$, $T_\sq$ decreases only with a constant probability:

\begin{lemma}\label{lem:T-will-decrease}
  Fix $\delta := \nfrac{1}{42}$.  If $n_0 \ge \max(900, 8\ln(1/\delta))$, and $n_1 := 2 n_0$ then, whatever Alice does,
  \begin{equation*}
    \Prb\Bigl( \exists n \in \{n_0+1,\dots,n_1\}\colon \;\; T_{n+1} < T_n \;\; \Bigm| B_{n_0}, T_{n_0} \ge 2, S_{n_0} \le \ln^2 n_0 \Bigr)  \quad \ge \quad 1/7.
  \end{equation*}
\end{lemma}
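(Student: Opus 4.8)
The plan is to push the ``decrease'' entry of the d-move column of Lemma~\ref{lem:transition-prbs} through an exponential supermartingale, and to feed it the supply of d-moves guaranteed by Lemma~\ref{lem:Alice-will-play--d--often}. Throughout, condition on the history up to time $n_0$ (so $B_{n_0}$, $S_{n_0}$, $T_{n_0}$ are fixed), restricted to the event $\{T_{n_0}\ge 2,\ S_{n_0}\le\ln^2 n_0\}$; since Lemmas~\ref{lem:transition-prbs} and~\ref{lem:Alice-will-play--d--often} need nothing about the past beyond $B_{n_0}$ (resp.\ $B_{n_0}$ and $S_{n_0}$), both remain valid under this finer conditioning, and the bound we obtain is then exactly of the required form. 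Apply Lemma~\ref{lem:Alice-will-play--d--often} with $\eps := \delta$; the hypothesis $n_0 \ge \max(900,\,8\ln(1/\delta))$ is common to both lemmas, so, letting $E$ be the event that Alice makes at least $n_0/3$ d-moves among the steps $n_0+1,\dots,n_1$, we get $\Prb(E^c)\le\delta = 1/42$.

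From Lemma~\ref{lem:transition-prbs} I extract two facts: at a c-move one has $\Delta_T\in\{0,+1\}$, and at a d-move performed at step $n$ one has $\Delta_T\in\{-1,0\}$ with $\Prb(T_{n+1}=T_n-1\mid\text{history})\ge (T_n-1)/n$. Hence $T_\sq$ can decrease only at a d-move step, and as long as it has not yet decreased since time $n_0$ it stays $\ge T_{n_0}\ge 2$, so at a d-move step $n$ the conditional probability of a decrease is at least $1/n\ge 1/(2n_0)$. Let $\sigma$ be the first $n\in\{n_0+1,\dots,n_1\}$ with $T_{n+1}<T_n$, and $\sigma:=\infty$ otherwise; the assertion to be proved is $\Prb(\sigma<\infty)\ge 1/7$. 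Define, for $n_0\le n\le n_1$,
\begin{equation*}
  L_n \;:=\; \I{\sigma>n}\cdot\!\!\prod_{\substack{n_0<n'\le n\\ n'\ \text{a d-move step}}}\!\!\Bigl(1-\tfrac1{n'}\Bigr)^{-1},\qquad L_{n_0}:=1 .
\end{equation*}
Then $(L_n)_{n=n_0}^{n_1}$ is a bounded supermartingale: at a c-move step nothing changes (no decrease is possible, and the product ignores c-moves), while at a d-move step $n$ the freshly introduced factor $(1-1/n)^{-1}$ is compensated, in conditional expectation, by $\Prb(\text{no decrease at step }n\mid\text{history})\le 1-1/n$ — valid on $\{\sigma>n-1\}$, where $T_n\ge 2$; off that event $L$ has already been set to $0$. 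Therefore $\Exp[L_{n_1}]\le L_{n_0}=1$.

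On the event $E$, the set $D$ of d-move steps has $|D|\ge n_0/3$ and $D\subseteq\{n_0+1,\dots,2n_0\}$; since $x\mapsto 1-1/x$ is increasing, $\prod_{n'\in D}(1-1/n')$ is largest when $D$ consists of the $\lceil n_0/3\rceil$ top elements, so
\begin{equation*}
  \prod_{n'\in D}\Bigl(1-\tfrac1{n'}\Bigr)\ \le\ \prod_{n'=2n_0-\lceil n_0/3\rceil+1}^{2n_0}\Bigl(1-\tfrac1{n'}\Bigr)\ =\ \frac{2n_0-\lceil n_0/3\rceil}{2n_0}\ \le\ \frac56 .
\end{equation*}
Hence on $\{\sigma=\infty\}\cap E$ we have $L_{n_1}=\prod_{n'\in D}(1-1/n')^{-1}\ge 6/5$, so $\tfrac65\Prb(\{\sigma=\infty\}\cap E)\le\Exp[L_{n_1}]\le 1$, i.e.\ $\Prb(\{\sigma=\infty\}\cap E)\le 5/6$. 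Adding $\Prb(E^c)\le 1/42$ gives $\Prb(\sigma=\infty)\le 5/6+1/42 = 6/7$, i.e.\ $\Prb(\sigma<\infty)\ge 1/7$, as required.

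The heart of the matter is the obstruction flagged in the text: on the window $(n_0,2n_0]$ the harmonic sum $\sum 1/n$ is only $\approx\ln 2 < 1$, so one cannot force a decrease with probability near~$1$ — only a constant is available, and to extract a usable one must control \emph{where} in the window the guaranteed d-moves sit, not merely how many there are. The worst case is all $n_0/3$ of them crowded at the right end, capping $\prod_{n'\in D}(1-1/n')$ at $5/6$; the value $\delta=1/42$ is then tuned so that $5/6+\delta = 6/7$, leaving exactly the advertised $1/7$. The remaining delicacy — and the reason the estimate is carried out pathwise through Lemma~\ref{lem:transition-prbs} rather than via any independence — is that the number and positions of the d-moves are themselves produced by Alice's adaptive, possibly randomized strategy, so the supermartingale has to be valid for every strategy at once.
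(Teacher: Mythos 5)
Your proof is correct and follows exactly the route the paper sketches (the detailed proof is deferred to the journal version~\cite{Makkeh-Pourmoradnasseri-Theis:pedigree:18}): apply Lemma~\ref{lem:Alice-will-play--d--often} with $\eps=\delta$ to get at least $n_0/3$ d-moves, use the d-move row of Lemma~\ref{lem:transition-prbs} for the per-step decrease probability, and control Alice's adaptive placement of those d-moves with the exponential supermartingale, the telescoped worst case $5/6$ plus $\delta=1/42$ giving exactly $6/7$. The only blemish is the off-by-one ambiguity in whether ``Alice's move at time $n$'' governs the transition $T_{n-1}\to T_n$ or $T_n\to T_{n+1}$, but that ambiguity is already present in the paper's statements of Lemmas~\ref{lem:transition-prbs} and~\ref{lem:Alice-will-play--d--often}, and your usage is internally consistent with the convention of Lemma~\ref{lem:transition-prbs}.
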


We can boost the probability to $1-\eps$, for arbitrary $\eps >0$, by iterating the argument $\Omega(\ln(1/\eps))$ times.

\begin{lemma}\label{lem:T-will-decrease:boost}
  Fix $\delta := \nfrac{1}{42}$.  For every $\eps\in\lt]0,\nfrac{1}{56}\rt[$, with $a := 10\ln(2/\eps)$, if
  \begin{equation*}
      n_0 \ge \max(900, 8\ln(1/\delta), (2a)^{4/\eps}, e^{6/\eps}),
  \end{equation*}
  and $n_1 := 2 a n_0$ then, whatever Alice does,
  \begin{equation*}
    \Prb\Bigl( \exists n \in \{n_0,\dots,n_1\}\colon \;\; T_{n+1} < T_n \;\; \Bigm| B_{n_0}, T_{n_0} \ge 2 \Bigr)  \quad \ge \quad 1-\eps
  \end{equation*}
\end{lemma}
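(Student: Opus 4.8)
The plan is to boost Lemma~\ref{lem:T-will-decrease} from a constant success probability on a single window $[n_0,2n_0]$ to a $1-\eps$ success probability on a longer window, by chaining together $\Theta(\ln(1/\eps))$ consecutive doubling windows and observing that each one gives Bob an \emph{independent-enough} chance of forcing a decrease of $T_\sq$, \emph{provided} the side condition $S_{n}\le\ln^2 n$ that Lemma~\ref{lem:T-will-decrease} needs is still in force at the start of that window. So the argument has two ingredients: (i) a geometric-series estimate that $k$ independent $1/7$-chances fail with probability $\le(6/7)^k$, and we need $(6/7)^k\le\eps/2$, which holds once $k\ge 5\ln(2/\eps)$ or so, explaining the constant $a=10\ln(2/\eps)$ (the windows $[n_0,2n_0],[2n_0,4n_0],\dots$ up to $[2^{k}n_0, 2^{k+1}n_0]$ with $2^{k}\approx a$, so $n_1=2an_0$); and (ii) a bound that throughout $[n_0,2an_0]$ one has $S_n\le\ln^2 n$ except on an event of probability $\le\eps/2$, so that Lemma~\ref{lem:T-will-decrease} is applicable at the left endpoint of every window.

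Concretely, first I would set up the telescoping of windows: let $m_i:=2^i n_0$ for $i=0,1,\dots,k$ where $k$ is the largest integer with $2^k\le a$, so $m_k\le an_0$ and $m_{k+1}\le 2an_0=n_1$. On window $i$, i.e.\ times in $\{m_i+1,\dots,m_{i+1}\}$, Lemma~\ref{lem:T-will-decrease} (with $n_0$ replaced by $m_i$, which is $\ge 900$ and $\ge 8\ln(1/\delta)$ since $m_i\ge n_0$ is large) says: conditioned on $B_{m_i}$, on $T_{m_i}\ge 2$, and on $S_{m_i}\le\ln^2 m_i$, the probability that $T_\sq$ strictly decreases somewhere in that window is $\ge1/7$. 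The events ``$T$ does not decrease on window $i$'' can be handled by conditioning successively on the state at the left endpoint of each window: if $T$ never decreases through windows $0,\dots,i-1$ then $T_{m_i}=T_{n_0}\ge2$ still (since $T_\sq$ is nonincreasing except through isolated-vertex creation — but wait, that is exactly the subtlety, see below), and then window $i$ independently gives a $\ge1/7$ shot. Taking the product over $i=0,\dots,k$ gives failure probability $\le(6/7)^{k+1}\le\eps/2$ by the choice of $a$.

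The one place where I expect real work, and which I think is the main obstacle, is reconciling two facts: Lemma~\ref{lem:T-will-decrease} conditions on $S_{n_0}\le\ln^2 n_0$, but the target Lemma~\ref{lem:T-will-decrease:boost} does \emph{not} — so I must show that the small-$S$ condition is self-propagating (or re-established quickly) across all $k+1$ windows except on a tail event of probability $\le\eps/2$. From the c-move column of Lemma~\ref{lem:transition-prbs}, $S_\sq$ has a strong downward drift (a c-move, only available when $S_n>0$, decreases $S_\sq$ with probability $\ge 1-\nfrac{4}{n}$ roughly, never increases it; a d-move increases $S_\sq$ by at most $1$ and only with probability $\le\nfrac{1}{n}$). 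Hence over the whole interval $[n_0,2an_0]$, the number of d-move-induced increments of $S_\sq$ is stochastically dominated by a sum of $\le 2an_0$ Bernoulli$(\le \nfrac{1}{n_0})$ variables, whose expectation is $\le 2a$ and which exceeds $\ln^2 n_0$ with probability $\le\eps/2$ once $n_0$ is large compared to $a$ — this is where the hypotheses $n_0\ge(2a)^{4/\eps}$ and $n_0\ge e^{6/\eps}$ are consumed, via a Chernoff/Markov bound making $2a\ll\ln^2 n_0$ and the deviation probability tiny. Meanwhile c-moves can only help, so $S_\sq$ never drifts above its starting value plus the d-increments; since $S_{n_0}\le\ln^2 n_0$ might fail, I instead argue that within the first window $S_\sq$ is pushed down below $\ln^2 m_1$ (if it started higher, c-moves are forced and dominate), and thereafter stays $\le\ln^2 m_i$ on each subsequent left endpoint, all on the complement of the $\eps/2$ tail event. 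Combining: off an event of probability $\le\eps/2$ the small-$S$ hypothesis holds at every $m_i$, and then off a further $\le\eps/2$ the windowed decreases give a success; union bound yields $\ge1-\eps$. I would also double-check the boundary bookkeeping — that ``$\exists n\in\{n_0,\dots,n_1\}$'' in the conclusion (note it starts at $n_0$, not $n_0+1$) is covered because window $0$ already handles $\{n_0+1,\dots,2n_0\}$ and $n=n_0$ adds nothing — and that the constant $1/7$ in Lemma~\ref{lem:T-will-decrease}, not $1/8$ or anything smaller, is what makes $a=10\ln(2/\eps)$ (rather than a larger constant) sufficient for $(6/7)^{k+1}\le\eps/2$ with $2^k\approx a$.
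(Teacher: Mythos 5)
Your high-level plan (iterate the windowed decrease argument about $\ln(1/\eps)$ times, and separately show that the small-$S$ hypothesis can be dropped) matches the only indication the paper gives --- the extended abstract defers the actual proof to the journal version and says merely that one ``iterates the argument $\Omega(\ln(1/\eps))$ times'' and that the conditioning on $S_n\le\ln^2 n$ disappears. So the comparison has to be against your proposal's own internal arithmetic, and there it has a genuine gap: your window bookkeeping does not close. You need $(6/7)^{k+1}\le\eps/2$, i.e.\ $k+1\ge\ln(2/\eps)/\ln(7/6)\approx 6.5\ln(2/\eps)$ applications of Lemma~\ref{lem:T-will-decrease}; but that lemma is formulated for a \emph{doubling} window $[m,2m]$, and inside $[n_0,2an_0]$ with $a=10\ln(2/\eps)$ only $\lfloor\log_2(2a)\rfloor=O(\ln\ln(1/\eps))$ doubling windows fit. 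With your choice ``$2^k\approx a$'' the failure bound is $(6/7)^{\log_2(2a)+1}=(2a)^{-\log_2(7/6)}\approx(2a)^{-0.22}$, which is nowhere near $\eps/2$: e.g.\ for $\eps=10^{-3}$ you get $k=6$ and $(6/7)^{7}\approx0.34$ versus the target $5\cdot10^{-4}$. Conversely, fitting $6.5\ln(2/\eps)$ doubling windows would force $n_1\ge(2/\eps)^{4.5}n_0$, exponentially larger than the $n_1=2an_0$ the lemma asserts. So the black-box chaining of Lemma~\ref{lem:T-will-decrease} over dyadic windows cannot deliver the stated statement; one has to re-run the underlying estimates (d-move counts via Azuma, per-d-move decrease hazard $\ge(T_n-1)/n$) on $\Theta(a)$ windows of length $\Theta(n_0)$ sitting inside $[n_0,2an_0]$, where the per-window success probability degrades from the constant $1/7$ to order $1/a$, and the naive product again fails to reach $1-\eps$ --- this is precisely where the peculiar hypotheses $n_0\ge(2a)^{4/\eps}$ and $n_0\ge e^{6/\eps}$ must be doing real work, whereas in your write-up they are ``consumed'' only to ensure $2a\ll\ln^2 n_0$, for which a vastly weaker bound (roughly $n_0\ge e^{\sqrt{2a}}$) would suffice. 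That mismatch is a further sign that the intended proof is not the one you sketched.

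Two smaller inaccuracies in part (ii): the c-move table does allow $\Delta_S=+1$ (with probability exactly $1/n$, when Bob hits Alice's common edge), so c-moves are not monotone decreasing in $S$; and the d-move probability of $\Delta_S=+1$ is bounded by $4/n$, not $1/n$. Neither is fatal --- the drift argument survives with adjusted constants --- but note also that if $S_{n_0}$ is genuinely large (it can be as large as $n_0$, since $B_{n_0}$ is conditioned on and Alice is adversarial), pushing $S$ below $\ln^2$ can eat an entire window of length $\Theta(n_0)$ and, per the c-move table, can meanwhile drive $T$ up through isolated-vertex creation; that is tolerable for this lemma (only one decrease is needed and $T\ge2$ is preserved), but your claim that $S$ is ``pushed down below $\ln^2 m_1$ within the first window'' should be argued from $S\le n_0$ plus the $\approx-1$ per-c-move drift, not waved through.
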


Note that Lemma~\ref{lem:T-will-decrease:boost} also gets rid of the conditioning on $S_n \le \ln^2 n$.

We are now ready to complete the proof of the main theorem.

\begin{proof}[of Theorem~\ref{thm:main:pedigreegraph}.]
  Let $\eps' \in\lt]0,\nfrac{1}{2}\rt[$ be given.  Set $t := 6/\eps'$.  Since $T_{\sq}$ can only increase when an isolated vertex is created, we have $T_n \le Y$, for all $n\ge 4$, where~$Y$ is the number of isolated vertices.  Hence, by Lemma~\ref{lem:isol-vert} and Markov's inequality, we have
  \begin{equation*}
    \Prb\Bigl( \exists n \ge 4\colon \;\; T_n \ge t+1 \Bigr)
    \le
    \Prb( Y \ge t )
    \le
    \Exp(Y) / t
    = \eps'/3.
  \end{equation*}

  Now take $n_0' \ge 12/\eps' +2$, and large enough to apply Lemma~\ref{lem:T-will-decrease:boost} $n_0:=n_0'$ and to $\eps := \eps'/3t$ (note that this is less than $1/56$).  Denote by $a$ be the number defined in that lemma.  Applying the lemma~$t$ times, for $n_0$ ranging over $n'_0 +j2a n'_0$, $j=0,\dots,t-1$, the probability that we fail at least once to obtain a decrease in the number of connected components, $T_\sq$, is at most $\eps'/3$.  So, with probability at least $1-2\eps'/3$, we must have $T_{n_0}=1$ for one of these $n_0$'s or for an $n$ between $n'_0 +(t-1)2a n'_0$ and $n'_0 +t2a n'_0$.

  Finally, since $n_0' \ge 12/\eps' +2$, by Lemma~\ref{lem:isol-vert}, with probability $1-\eps'/3$, $T_\sq$ will not increase after $n'_0$, and hence, with probability $1-\eps'$, will drop to~1 and stay there for all eternity.  Bob wins.
\end{proof}

\section{Some Open Questions}
There are two questions which we believe should be asked in the context of our result.

Firstly, are there other polytopes whose graphs are not complete, but the minimum degree is asymptotically that of a complete graph?  Could that even be the case for the Traveling Salesman Problem polytope itself?

Secondly, in view of the Traveling Salesman Problem polytope, it would be interesting to find other combinatorial conditions on cycles which are implied by the adjacency of the corresponding vertices on the TSP polytope.  The pedigree graph connectedness condition is derived from an extension of the TSP polytope, but maybe there are other combinatorial conditions without that geometric context.  The graph resulting from such a condition might be ``closer'' to the actual TSP polytope graph, i.e., add fewer edges.


\end{document}